\numberwithin{equation}{section}
\DeclareMathAlphabet{\pazocal}{OMS}{zplm}{m}{n}
\DeclareFontFamily{OT1}{pzc}{}
\DeclareFontShape{OT1}{pzc}{m}{it}{<-> s * [1.1500] pzcmi7t}{}
\DeclareMathAlphabet{\mathpzc}{OT1}{pzc}{m}{it}
\newtheorem{theorem}{Theorem}[section]
\newtheorem{lemma}[theorem]{Lemma}
\newtheorem{proposition}[theorem]{Proposition}
\declaretheorem[sibling=theorem,style=definition, qed=\bell]{remark}
\numberwithin{equation}{section}
\newcommand{\CC}{{\mathbb C}}
\newcommand{\RR}{{\mathbb R}}
\newcommand{\NN}{{\mathbb N}}
\newcommand{\Dc}{{\mathscr{D}}}
\newcommand{\Ec}{{\mathscr{E}}}
\newcommand{\Fc}{{\mathscr{F}}}
\newcommand{\F}{{\mathcal{F}}}
\newcommand{\Gc}{{\mathcal{G}}}
\newcommand{\Lc}{{\mathcal{L}}}
\newcommand{\Rc}{{\mathscr{R}}}
\newcommand{\T}{\cdot_T}
\newcommand{\fG}{{\mathfrak G}}
\newcommand{\X}{{\mathcal X}}
\newcommand{\Y}{{\mathcal Y}}
\newcommand{\loc}{\rm loc}
\newcommand{\Floc}{\Fc_{\mathrm{loc}}}                   
\newcommand{\supp}{{\mathrm{supp} \, }}
\newcommand{\no}[1]{:\! \! #1 \! \!:}
\newcommand{\be}{\begin{equation}}
\newcommand{\ee}{\end{equation}}
\newcommand{\Lap}{\triangle}
\def\eg{{\it e.g.\ }}
\def\ie{{\it i.e.\ }}
\newcommand{\0}{\emptyset}
\newcommand{\g}{\mathpzc{g}}
\newcommand{\h}{\mathpzc{h}}
\newcommand{\e}{\mathpzc{e}}
\newcommand{\G}{\mathpzc{G}}
\newcommand{\LieGc}{\mathrm{Lie}\G_c(M)}
\newcommand{\LieRc}{\mathrm{Lie}\,\Rc_c}
\newcommand{\Fmloc}{\Fc_{\bullet\,\mathrm{loc}}}  
\newcommand{\Fnloc}[1]{\Fc_{#1\,\mathrm{loc}}}   
\newcommand{\BV}{\mathrm{BV}}
\newcommand{\BVloc}{\mathrm{BV}_{\mathrm{loc}}}
\newcommand{\BVmloc}{\mathrm{BV}_{\bullet\,\mathrm{loc}}}
\newcommand{\BVnloc}[1]{\mathrm{BV}_{#1\,\mathrm{loc}}}
\begin{document} 

\title[The UAMWI and the Wess-Zumino consistency condition]{Unitary, anomalous Master Ward Identity and its connections to the 
Wess-Zumino condition, BV formalism and $L_\infty$-algebras}

\author[R. Brunetti]{Romeo Brunetti}
\address{Dipartimento di Matematica, Universit\`a di Trento, 38123 Povo (TN), Italy}
\email{romeo.brunetti@unitn.it}

\author[M. D\"utsch]{Michael D\"utsch}
\address{Institute f\"ur Theoretische Physik, Universit\"at G\"ottingen, 37077 G\"ottingen, Germany}
\email{michael.duetsch3@gmail.com}

\author[K. Fredenhagen]{Klaus Fredenhagen}
\address{II. Institute f\"ur Theoretische Physik, Universit\"at Hamburg, 22761 Hamburg, Germany}
\email{klaus.fredenhagen@desy.de}

\author[K. Rejzner]{Kasia Rejzner}
\address{Department of Mathematics, University of York, YO10 5DD York, UK}
\email{kasia.rejzner@york.ac.uk}

\maketitle

\begin{abstract}
The C*-algebraic construction of QFT by Buchholz and one of us relies on the causal structure of spacetime and a classical Lagrangian. 
In one of our previous papers we have introduced additional structure into this construction, namely an action of symmetries, which is related to fixing renormalisation conditions. This action characterizes anomalies and 
satisfies a cocycle condition which is summarized in the unitary anomalous Master Ward identity. Here (using perturbation theory) we
show how this cocycle condition is related to the Wess-Zumino consistency relation and the consistency relation for the anomaly in the BV formalism, 
where the latter follows from the generalized Jacobi identity for the associated $L_\infty$-algebra. 
In addition, we give a proof that perturbative agreement (i.e., 
independence of a perturbative QFT on the splitting of the Lagrangian into free and interacting parts) can be achieved by finite renormalizations.
\end{abstract}
\section{Introduction}

One of the most interesting features of quantum physics is the fact that symmetries of the classical theory are, in general, not straightforwardly transferred to the corresponding quantum theory. Instead, often the symmetries are modified by \emph{anomalies}. These satisfy the \emph{Wess-Zumino consistency relations} \cite{WZ71}, and the arising new structures have a crucial impact on the quantum theory, e.g. on the formulation of the standard model of particle physics.

In perturbative algebraic quantum field theory (pAQFT), the anomalies can be obtained in terms of the anomalous Master Ward Identity (AMWI) \cite{Brennecke08,DB02,DF03}, and it was shown by Hollands \cite{Hollands08} that in Yang-Mills theory these anomalies satisfy a consistency relation which allows one to apply the homological methods of the BRST-BV formalism, 
where the key information about the theory is encoded in a certain differential. In \cite{FredenhagenR13,Rej13} this result was generalized to arbitrary theories with local gauge symmetries 
and an infinite-dimensional rigorous version
of the Batalin-Vilkovisky (BV) formalism was formulated, without the use of path integral or any regularization scheme (in contrast to \cite{Costello}).
One of the crucial results of that work was to describe the difference between classical symmetries and their quantized counterparts in terms of deformation of the classical BV differential to the quantum BV differential. This deformation is induced by the deformation of the pointwise product of the classical theory to the renormalized time-ordered product. 
In particular, a renormalized BV Laplacian was introduced, and its action on the BV algebra could be understood in terms of the anomaly \cite{Rej13}. Recently, Fr\"ob \cite{Fro} succeeded in
proving that the arising algebraic structure is that of an $L_{\infty}$-algebra. The Wess-Zumino consistency relation has also been applied recently in \cite{SZ17} in the treatment of global anomalies. Another insight concerns the difference between consistent anomalies, \ie those that satisfy the Wess-Zumino conditions, and the so-called covariant anomalies \cite{BZ84}. We do not enter into this in our paper and refer the reader to the literature \cite{Bert}.

In a previous paper \cite{BDFR21}, we investigated the action of symmetries in the C*-algebraic construction of scalar quantum field theories proposed in \cite{BF19}. 
In that construction the algebras are generated by S-matrices which describe local interactions within  compact regions of spacetime. Subject to a causality condition and a unitary version of the Schwinger-Dyson equation, one obtains a net of C*-algebras satisfying the Haag-Kastler axioms, generalized to generic globally hyperbolic spacetimes according to the principles of locally covariant QFT \cite{BFV03}. Starting from the free Lagrangian and admitting only linear interactions, one obtains the well known Weyl algebra of the free field.
If one includes more general interactions, the arising algebra possesses automorphisms which act nontrivially only in a compact subregion. The existence of such internal symmetries violates the time slice axiom which states that observations in the neighborhood of some Cauchy surface determine all other observables, \ie the algebra associated to this neighborhood is already the algebra of the whole spacetime. 

Therefore, we introduced in \cite{BDFR21} an
additional axiom for these C*-algebras:
the ``unitary anomalous Master Ward Identity (UAMWI).'' 
It characterizes how \textit{symmetries of the classical configuration space}, which are not necessarily symmetries of the Lagrangian, are modified in the quantum theory. The symmetries considered form a group $\G_c$ of transformations with compact support,
generated by affine field redefinitions
and point transformations  \footnote{We remark that none of the considered Lagrangians is invariant under such transformations.}.
Their action on observables is modified by a map (the anomaly term) $\zeta$ from $\G_c$
to a group $\Rc_c$
of transformations of functionals.
In perturbation theory, the elements of $\Rc_c$ relate different choices of time ordering prescriptions which not necessarily satisfy covariance conditions. According 
to Stora's \emph{Main Theorem of Renormalization} \cite{PopineauS16,DF04}
these transformations form a subgroup of the St\"uckelberg-Petermann renormalization group \cite{BDF09}. This subgroup is determined by the conditions Action Ward Identity, Field Equation and Field Independence on the time ordering and the requirement of compact support. It should be distinguished from other subgroups related to covariance conditions. Actually, the intersection with such subgroups is typically trivial.
Most importantly for this paper, $\zeta$ satisfies a cocycle relation. We also showed that this cocycle $\zeta$ exists 
in the perturbative version of the model where it can be determined up to equivalence and yields the known anomalies. 

Building on these results, in the present paper we explore further the cocycle condition, focusing on perturbation theory. Since the non-triviality of this cocycle is related
to the existence of anomalies, it is reasonable to expect that it should be related to the Wess-Zumino consistency condition. The latter has been originally derived in the context of the effective action and it reflects to the way in which this  action transforms under infinitesimal gauge symmetries. In Section~\ref{sec:WZ}  we review that original derivation, following essentially \cite{AG85}. Although it is clear that the Wess-Zumino consistency condition has to be related to the action of the Lie algebra of the group of symmetries of the theory, the precise statement of this fact in the framework of \cite{BF19,BDFR21} has not been known. While addressing this question, the present work also makes connections with another statement of the Wess-Zumino consistency condition, namely the one present in the BV formalism.

Concretely, we show that, considering the infinitesimal  symmetry transformations,
the cocycle $\zeta$ induces a corresponding map $\Delta:\mathrm{Lie}\G_c\to\LieRc$ which is a Lie algebraic cocycle, 
and that this cocycle is the anomaly map appearing in the AMWI
(Theorem 10.3 in \cite{BDFR21} and Theorem \ref{thm:Liecc} in this paper). This provides a link between the notions of anomalies used in perturbation theory \cite{Brennecke08} and anomalies in the non-perturbative formulation of \cite{BDFR21}.
In  Section~\ref{sec:infinitesimal-cc}, in Theorem~\ref{thm:cc}, we give another derivation of the cocycle relation for $\Delta$: we show
that the anomaly $\Delta$ of the AMWI satisfies a consistency condition, which is precisely the cocycle relation for $\Delta$, and which
we call the \textit{extended Wess-Zumino consistency condition}, as it reduces to the standard Wess-Zumino condition for quadratic interactions.

Finally, we discuss the relation to the BV formalism. In \cite{FredenhagenR13}, two of us have shown that the anomaly in the AMWI is in fact related to the \textit{renormalised BV Laplacian}, so it is natural to expect that the algebraic properties of the BV Laplacian would be reflected also in the cocycle condition. This is indeed the case, as we prove in Section~\ref{sec:BV} that the extended cocycle condition for $\Delta$ follows directly from the nilpotency of the BV operator, when applied to those infinitesimal symmetries which arise from affine field 
redefinitions $\g\in\G_c$ (Prop.~\ref{prop:BV->WZ}).

There is another structure which is often studied in perturbation theory, namely the principle of perturbative agreement 
\cite{HW04,Zahn15,DHP17}, which requires that the perturbative construction should not depend on the way the Lagrangian is splitted into free and interacting part. We prove that this principle can be satisfied by finite renormalizations and clarify its relation to the UAMWI (subsection \ref{PA}).

\section{The framework}\label{sec:properties}

\subsection{Perturbative algebraic quantum field theory (pAQFT)} We use the same setting for pAQFT as in \cite[Sect.~10 and App.~C]{BDFR21}; for the convenience of the reader we repeat here in a somewhat sketchy way the 
notations, definitions and results being relevant for this paper.

We consider an $n$-component real scalar field $\Phi$ on a globally hyperbolic curved space-time $M$ of dimension larger than 2. 
The classical configuration space $\Ec(M,\RR^n)$ is the space of 
smooth functions on $M$ with values in $\RR^n$. 
The basic field $\Phi(x)$ is the evaluation functional
\be\label{eq:basic-field}
\Phi(x)\,:\,\Ec(M,\RR^n)\to\RR^n\,; \,\Phi(x)[\phi]=\phi(x)\ .
\ee
Observables are elements of the space $\Fc(M)$ of functionals $F:\Ec(M,\RR^n)\to\CC$ which are polynomial in $\phi$ and have the form
\be 
F[\phi]=\sum_{k=0}^m\langle f_k,\phi^{\otimes k}\rangle
\ee
with compactly supported distributional densities $f_k$ on $M^k$ satisfying suitable conditions on their wave front sets \cite{BF00,BDF09}.
The latter ensures the existence of the star product 
of the free theory (which is given in terms of the free Lagrangian $L$, see below) as a map
$\star:\Fc(M)\times\Fc(M)\to\Fc(M)$. This star product is an $\hbar$-dependent deformation of the (commutative) pointwise product: 
$F\cdot G[\phi]\doteq F[\phi]G[\phi]$ for $F,G\in\Fc$, $\phi\in\Ec(M,\RR^n)$, see \cite{DF01} or \cite[Chap.~2]{D19}. The (functional) support of a functional $F$ as above is the smallest closed set $K\subset M$ such that $\supp f_k\subset K^k$ for all $k$ (where $\supp f_0=\0$ is understood).

The subspace of \emph{local} functionals $F\in\Floc(M)$ is defined by the additional conditions that $F$ is \emph{$\RR$-valued} and of the form
$F[\phi]=\int \hat{F}(x,j_x(\phi))$, with a smooth density-valued function $\hat{F}$ on the jet space of $\Ec(M,\RR^n)$ with compact 
support in $x$. 

The Lagrangian $L$ is the usual Lagrangian of the free theory,
$$
L(x)[\phi]=\frac12 \bigl(g^{-1}(d\phi(x),d\phi(x))+m^2(\phi(x),\phi(x))\bigr)\,d\mu_g(x)\ ,
$$
where we use the canonical metric on $\RR^n$; $L(x)$ is a density with values in the local functionals and we write  
$L(f)\doteq\int_M L(x)f(x)\in\Floc(M)$ for $f\in\Dc(M,\RR)$.
The $\star$-product is given in terms of a Hadamard function $H$, \ie a bisolution of the associated Euler-Lagrange operator, the Klein-Gordon operator $K$,
\begin{equation}\label{eq:star}
    F\star G[\phi]=e^{\langle\frac{\delta}{\delta \phi},H\frac{\delta}{\delta\phi'}\rangle}F[\phi]G[\phi']|_{\phi'=\phi}\ .
\end{equation}
$H$ is of positive type, its antisymmetric part is $\frac{i}{2}(\Delta^{\mathrm{R}}-\Delta^{\mathrm{A}})$ with the retarded (R) and advanced (A) Green operator of $K$, 
and its wave front set satisfies the microlocal spectrum condition \cite{Rad96,BrunettiFK96}. 
There is no unique Hadamard function, but different choices differ by smooth bisolutions and lead to equivalent $\star$-products.

To construct the time ordered product we use an off-shell
version of the Epstein-Glaser method\footnote{Epstein and Glaser consider Fock space operators of the form $\sum_k \langle f_k,\no{\varphi^{\otimes k}}\rangle$ with the normal ordered products of the free field $\varphi$. This corresponds to a restriction of functionals to the space of solutions of the free field equation (on shell formalism).} 
\cite{EG73}, generalized to globally hyperbolic space times \cite{BF00,HW01,HW02}.
We furthermore use the fact that the $k$-fold pointwise product of local functionals which vanish at the zero configuration is injective and thus isomorphic to its image, the $k$-local functionals%
\footnote{Note that $\Floc(M)=\Fnloc{0}(M)+\Fnloc{1}(M)$.}
$F\in\Fnloc{k}(M)$. Identifying the $1$-fold product with the identity and the $0$-fold product with the map $\RR\ni c\mapsto F_c$ with the constant functional $F_c[\phi]=c$, we can describe the time ordered product as a linear map 
\be\label{eq:T-product}
T:\Fmloc(M)\to\Fc(M)
\ee
where $\Fmloc(M)$, the space of multilocal functionals, is the direct sum of the spaces $\Fnloc{k}(M)$ of $k$-local functionals, $k\in\NN_0$ \cite{FredenhagenR13}.
We can then equip the space $T\Fmloc(M)$ with the commutative and associative product
\be\label{eq:cdotT}
F\cdot_T G\doteq T((T^{-1}F)\cdot (T^{-1}G))\ .
\ee

On local functionals $T$ is the identity. In the sense of formal power series we can then characterize $T$ by its action on exponentials of local functionals $S(F)=T e^{iF}\equiv e_{\cdot_T}^{iF}$, the \emph{formal S-matrices}. They are unitaries with respect to the $\star$-product and 
have to satisfy the condition of \emph{causal factorization}
\be\label{eq:caus}
S(F+G)=S(F)\star S(G)\quad\text{if $\supp F\cap J_-(\supp G)=\emptyset$}
\ee
where $J_-$ denotes the causal past of a space-time region and $F,G\in\Floc(M)$. 
The time-ordered product is further restricted by renormalization conditions: as explained in the introduction, we do not impose 
any  covariance conditions; besides the Action Ward Identity which is implicit in our formalism since the time ordered products depend only on the functionals but not on the way they are obtained as integrals over functions of the fields, 
we only require field independence
\be\frac{\delta}{\delta\phi}T(F)=T\left(\frac{\delta}{\delta\phi}F\right),\quad F\in\Fmloc(M),
\ee
and the off-shell field equation%
\footnote{In \cite{HW04} this renormalization condition is called ``free field factor axiom."}
\be\label{eq:T-field-eq}
T\bigl(F\cdot\langle\Phi,f\rangle\bigr)=T\bigl(\langle F',E^{\mathrm{F}}f\rangle\bigr)+T(F)\cdot\langle\Phi,f\rangle\ ,\ f\in\Dc(M,\RR^n),\quad F\in\Fmloc(M),
\ee
where $E^{\mathrm{F}}=H+i\Delta^{\mathrm{R}}$ is the Feynman propagator associated to the Hadamard function $H$ and $F'$ is the first derivative of $F$. Time ordered products satisfying these axioms exist, and, according to Stora's {Main Theorem of Renormalization} 
\cite{PopineauS16,DF04}, any two formal S-matrices $S$ and $\hat{S}$ are related by
\be\label{eq:MainT}
  \hat S=S\circ Z
\ee
where $Z$ is a formal power series 
\be\label{eq:Z-perturbative}
  Z(F+c)=c+\sum_{n=1}^{\infty}\frac{1}{n!}Z_n(F^n)
\ee
with linear maps $Z_n:\Fnloc{n}(M)\to\Floc(M)$, $F\in\Fnloc{1}(M), c\in\RR$. The \emph{St\"uckelberg-Petermann renormalization group} $\Rc_0$ is defined 
to be the set of maps $Z\equiv(Z_n)_{n\in\NN}$ appearing in \eqref{eq:MainT}, and one proves that this set is indeed a group \cite{DF04}.
For a direct definition of $\Rc_0$ see \cite{DF04,BDF09} or \cite[Chap.~3.6]{D19}. We recall that $Z$ commutes with the addition of constant functionals, $Z(F+c)=Z(F)+c$, and that it acts on quadratic functionals $V$ by adding a constant: $Z(V)=V+c$ for some $c\in\CC$ depending on $V$.

We immediately see that $Z(0)=0$ and $Z_1=\mathrm{id}$. To include also possible changes of the Feynman propagator which is unavoidable
in a generally covariant formalism \cite{HW02},
we generalize the definition of $\Rc_0$ by admitting nontrivial, but still invertible $Z_1$ which describe the change of the normal ordering and thus the action of the time ordering operator $\hat{T}$ on 1-local functionals. For convenience, we continue to use a time ordering operator $T$ which is the identity on local functionals and
obtain the more general time orderings by composition with the renormalization group map $Z$ as in equation \eqref{eq:MainT}.

We do not add conditions on covariance to our renormalization conditions, since we want to have the freedom to add quite general external fields to our system. Covariance under certain symmetries then becomes visible in the triviality of the corresponding cocycles.
\subsection{Unitary anomalous master Ward identity}
In pAQFT, the unitary anomalous master Ward identity (UAMWI) describes the behaviour of the time-ordered product under the group $\G_c(M)$ of compactly supported automorphisms of the affine bundle $M\times \RR^n$. This group is generated by the following transformations $\g:\Ec(M,\RR^n)\to\Ec(M,\RR^n)$:
\begin{itemize}
   
    \item \emph{Point transformations}, \ie   smooth and compactly supported diffeomorphisms $\rho:M\to M$
    inducing the transformation $\g_\rho:\phi\mapsto \g_\rho(\phi)\doteq\phi\circ\rho$.
    
    \item \emph{Affine field redefinitions} $\g_{(A,\psi)}$ 
    with $A\in\Dc(M,\mathrm{GL}(n,\RR))$ and $\psi\in\Dc(M,\RR^n)$ which act on configurations by
    \be\label{eq:g(affine)-phi}
    \g_{(A,\psi)}(\phi)(x)\doteq \phi(x)A(x)+\psi(x)\ .
    \ee
    where $\phi(x)$ and $\psi(x)$ are considered as row vectors.
    
\end{itemize}
The action of $\G_c(M)$ on a functional $F\in\Floc(M)$ is defined by 
\be\label{eq:g-star}
   \g_\ast F[\phi]\doteq F[\g(\phi)]
\ee
and the free Lagrangian $L$ is transformed by
\be\label{eq:g_*L}
  ((\g_\rho)_\ast L)(f)\doteq(\g_\rho)_{\ast}(L(f\circ\rho))\ ,\quad ((\g_{(A,\psi)})_\ast L)(f)\doteq (\g_{(A,\psi)})_\ast(L(f))
\ee
with $f\in\Dc(M,\RR)$. Note that $(\g\h)_\ast=\g_\ast\h_\ast$ for $\g,\h\in\G_c(M)$, that is, $\G_c(M)\ni\g\mapsto\g_\ast$ is a 
representation of $\G_c(M)$ by maps on $\Floc(M)$.

The group $\G_c(M)$ acts on the full Lagrangian, and hence on the interaction by an $L$-dependent action  on $\Floc(M)$
\begin{equation}\label{eq:g_L}
    (\g,F)\mapsto \g_LF\doteq \delta_{\g} L+\g_{\ast } F\quad\text{where}\quad\delta_\g L\doteq \g_\ast L(f)-L(f)
\end{equation}
with $f\in\Dc(M,\RR)$ such that $f\vert_{\supp\g}=1$.
Obviously $\mathpzc{e}_L=\mathrm{id}_{\Floc(M,L)}$ for the unit $\e\in \G_c(M)$, and one verifies that  $(\g\h)_L=\g_L\circ\h_L$.

The unitary anomalous master Ward identity (UAMWI) relates the transformations induced by the action $\g\to\g_L$ to renormalization group transformations $\zeta_{\g}\in\Rc_0$ with $\supp\zeta_{\g}=\supp\g$. Here the support of $Z\in\Rc_0$ is the smallest closed subset 
$N$ of $M$ such that $Z(F+G)=F+Z(G)$ for all $F,G\in\Floc(M)$ with $\supp F\cap N=\emptyset$. The subgroup $\Rc_c$ of $\Rc_0$ of renormalization group maps $Z$ with \emph{compact support} was discussed in \cite[Appendix~C]{BDFR21}.

We will now discuss the UAMWI in pAQFT. To allow for an off-shell description, we introduce ``sources.''
Let $q\in\Ec_{\mathrm{dens}}(M,\RR^n)$ be a  smooth density and we define $L_q\doteq L-\langle\Phi,q\rangle$. In pAQFT, the UAMWI states that there exists a map (called ``anomaly map")
\be
\zeta:\mathpzc{G}_c(M)\rightarrow \Rc_c,\quad\text{satisfying} \quad\zeta_\e=\mathrm{id}_{\Floc(M)},\quad \supp\zeta_{\g}\subset\supp \g
\ee
and the \emph{cocycle relation}%
\footnote{Note that all the maps in the cocycle relation are nonlinear, and we use juxtaposition to denote composition of maps. For the composition of a renormalization group transformation $Z$ with $\g_L$ this means in terms of the linear maps $Z_n$ for $F\in\Fnloc{1}(M)$
$$Z\circ\g_L(F)=\g_L(F)[0]+\sum_n\frac{1}{n!}Z_n\bigl((\g_L(F)-\g_L(F)[0])^n\bigr)\ .$$}
\begin{equation}\label{eq:cocycle}
 \zeta_{\g\h}=\zeta_{\h}\,\h_L^{-1}\zeta_{\g} \h_L\ ,\quad \g,\h\in\G_c(M)\ ,
\end{equation}
such that for every smooth density $q\in\Ec_{\mathrm{dens}}(M,\RR^n)$
\begin{equation}\label{eq:uni-anom-MWI}
    S\circ\g_{L_q}(F)[\phi]= S\circ\zeta_\g(F)[\phi]\ , \text{ for $\phi$ solving}\ \frac{\delta L}{\delta\phi}[\phi]=q \ ,
\end{equation} 
with $\g\in\G_c(M),\,\,F\in\Floc(M)$ arbitrary.
As shown in Theorem 10.3 in \cite{BDFR21}, the UAMWI follows%
\footnote{In \cite{BDFR21} only the case $q=0$ was treated. The generalization to arbitrary densities $q$ relies on the fact that $\zeta$ does not change under adding a source term $-\langle\Phi,q\rangle$ to the Lagrangian, see Theorem \ref{th:AMWI-uAMWI} in appendix \ref{app:AMWI-uAMWI}.
In particular, the proof of that Theorem explicitly shows that the cocycle relation \eqref{eq:cocycle} is a necessary condition for the UAMWI \eqref{eq:uni-anom-MWI}.}
from the anomalous master Ward identity (AMWI) \cite{Brennecke08} (recalled below in \eqref{eq:AMWI-q} or \eqref{eq:anomMWI}), which is its infinitesimal version, formulated in terms of the respective Lie algebras.

The Lie algebra $\LieRc$ is defined as follows (compare \cite[Appendix~C]{BDFR21}): it is
the space of formal power series $z(F)=\sum_{n=1}^\infty \frac{1}{n!}z_n(F^n)$, with linear maps $z_n:\Fnloc{n}(M)\to\Floc(M)$, with 
the properties
\begin{itemize}
    \item[$(\mathrm{P}1)$]$\mathrm{id}+\lambda z_1$ is invertible for $\lambda$ sufficiently small,
    \item[$(\mathrm{P}2)$] $z(F+G)=z(F)+z(G)$ for $\supp F\cap\supp G=\0$, $F,G\in\Floc(M)$,
    \item[$(\mathrm{P}3)$]$z(F+\langle\Phi,\psi\rangle)=z(F)$ for $\psi\in\Dc(M,\RR^n)$, 
     \item[$(\mathrm{P}4)$]$\frac{\delta}{\delta\phi}z(F)=\langle z'(F),(\frac{\delta}{\delta\phi}F)\rangle$ ,
     \item[$(\mathrm{P}5)$] the support of $z$ is compact, where $\supp z$ is the smallest closed subset $N$ of $M$ such that $z(F+G)=z(G)$
for all $F,G\in\Floc(M)$ with $\supp F\cap N=\emptyset$.
\end{itemize}

The action of $\G_c(M)$ on the configuration space (considered as an affine space) induces an action of the Lie algebra $\LieGc$
with values in the associated vector space,
\be
  \Ec(M,\RR^n)\times\LieGc\ni(\phi,X)\mapsto \phi X \ .
\ee
To determine the Lie bracket,
it is convenient to describe the Lie algebra $\LieGc$ in a faithful representation of the group. Since $\G_c(M)$ acts from the right on field configurations, 
we write it in terms of a matrix multiplication from the right on the space $\Ec(M,\RR^n)\oplus\RR$ in the form
\begin{equation}\label{eq:repres-of-G}
 \g\,:\,   (\phi,c)\mapsto (\phi,c)\left(\begin{array}{cc}
       A  &  0\\
        \psi & 1
    \end{array}\right)\circ\left(\begin{array}{cc}
      \rho  &  0\\
        0 &  \mathrm{id} 
    \end{array}\right)\ ,
\end{equation}
with $\g=(A,\psi,\rho)$,
from which we get
\begin{equation}\label{eq:repres-of-LieG}
X\,:\, \phi\mapsto  (\phi,1)\left(\begin{array}{cc}
       a+\stackrel{\leftarrow}{\partial}_{\mu} v^{\mu} & 0 \\
        p & 0
    \end{array}\right)=(\phi a+v^{\mu}\partial_{\mu}\phi+p,0)\doteq(\phi X,0)
\end{equation}
with the Lie algebra element $X=(a,p,v)$, $a\in\Dc(M,\mathrm{gl}(n,\RR))$, $p\in\Dc(M,\RR^n)$ and a smooth vector field $v$ with compact support. The Lie bracket 
\be
[(a,p,v),(b,q,w)]=([a,b]+w^{\nu}\partial_{\nu} a-v^{\nu}\partial_{\nu} b,pb-qa+w^{\nu}\partial_{\nu}p-v^{\nu}\partial_{\nu}q,w^{\nu}\partial_{\nu}v-v^{\nu}\partial_{\nu}w) 
\ee
can directly be obtained from the matrix representation above. Note the unusual sign of the Lie bracket of vector fields due to the action of derivatives to the functions on the left, indicated by the upper left arrow. 

The action of $\LieGc$ on field configurations yields a representation $X\mapsto \partial_X$ on the space of local functionals 
with
\be
\partial_XF[\phi]=\langle F'[\phi],\phi X\rangle\equiv\int \frac{\delta F}{\delta\phi_a(x)}[\phi](\phi X)_a(x)\,,
\ee
where we use the usual summation conventions over the components of $\phi$, $a=1,\dots,n$; and
the functional derivative is naturally identified with a density. In particular we have
\be\label{eq:bracket}
[\partial_X,\partial_Y]=\partial_{[X,Y]}\ .
\ee
For the Lagrangian, we set
\be
\partial_XL\doteq\partial_XL(f)\quad\text{with $f\in\Dc(M,\RR)$ satisfying $f\vert_{\supp X}=1$.}
\ee

To formulate explicitly the above statement, that the AMWI is the infinitesimal version of the UAMWI,  
let $X\in\LieGc$ be the tangent vector at $\lambda=0$ of a smooth curve $\lambda\mapsto \g^\lambda\in\G_c(M)$ with $\g^0=\e$. 
Starting with the UAMWI \eqref{eq:uni-anom-MWI} we substitute $\g^\lambda$ for $\g$ and apply $\frac{d}{d\lambda}\vert_{\lambda=0}$; 
this yields 
\be\label{eq:AMWI-q}
T\Bigl(e^{iF}\cdot\bigl(\partial_X F+\partial_X L_q -\Delta X(F)\bigr)\Bigr)[\phi]=0\ , \text{ for }\phi
\text{  solving }\ \frac{\delta L}{\delta\phi}[\phi]=q\,\ 
\ee
with $X\in\LieGc,F\in\Floc(M)$ and where 
\be\label{eq:Delta-X}
\Delta\,:\,\LieGc\ni X\mapsto\Delta X\doteq\frac{d}{d\lambda}\Big\vert_{\lambda=0}\zeta_{\g^\lambda}\in\LieRc\,.
\ee
Indeed, \eqref{eq:AMWI-q} agrees with the AMWI, thus 
$\Delta X(F)$ coincides with the uniquely determined anomaly in the AMWI (see \cite[Thm.~7]{Brennecke08}, \cite[Thm.~5.2]{BrenneckeD09} and \cite[Chap.~4.3]{D19}). The AMWI \eqref{eq:AMWI-q} may also be written in the equivalent form
\be\label{eq:anomMWI}
e_T^{iF}\cdot_T\bigl(\partial_X F+\partial_X L-\Delta X(F)\bigr)=\int \ \bigl(e_T^{iF}\cdot_T (\partial_X\Phi(x))_a\bigr)
\frac{\delta L(f)}{\delta \phi_a(x)}\,,\ f\equiv 1 \text{ on }\supp X\ , 
\ee
where $\delta L(f)/\delta \phi_a(x)$ is understood as a density.
We observe that the map $X\mapsto\Delta X$ is linear and that $\supp\zeta_{\g}\subset\supp\g$ implies $\supp \Delta X\subset\supp X$.
Moreover, there is a common locality of $\Delta X(F)$ in $X$ and $F$ derived in \cite[Thm.~7]{Brennecke08}, see also \cite[Thm.~4.3.1]{D19}:
\begin{lemma}\label{lem:prop}
The anomaly map $\Delta$ of the AMWI satisfies
\be
\supp \Delta X(F)\subset \supp F\cap\supp X.
\ee
and
\be
\Delta X(F)=0\quad\text{if}\quad\supp F\cap\supp X=\emptyset\ .
\ee
\end{lemma}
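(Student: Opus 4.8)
The plan is to read off both assertions from the facts, established above, that $\Delta X\in\LieRc$ and $\supp\Delta X\subset\supp X$, using the defining properties $(\mathrm{P}2)$ and $(\mathrm{P}5)$ of $\LieRc$ together with the locality of the renormalisation maps, $\supp z(G)\subset\supp G$ for $z\in\LieRc$, which is a standard structural property of the St\"uckelberg--Petermann group (cf.\ the direct definition of $\Rc_0$ in \cite{DF04,BDF09} and \cite[Chap.~3.6]{D19}).

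I would dispatch the second statement first. If $\supp F\cap\supp X=\emptyset$, then also $\supp F\cap\supp\Delta X=\emptyset$ since $\supp\Delta X\subset\supp X$; hence $(\mathrm{P}5)$ gives $\Delta X(F+G)=\Delta X(G)$ for every $G\in\Floc(M)$, and putting $G=0$ together with the fact that $\Delta X$ has no constant term yields $\Delta X(F)=\Delta X(0)=0$.

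For the first statement I would fix $x\notin\supp F\cap\supp X$ and show $x\notin\supp\Delta X(F)$, distinguishing two cases. If $x\notin\supp F$, this is exactly the locality of the renormalisation maps applied to $z=\Delta X$, namely $\supp\Delta X(F)\subset\supp F$. If instead $x\notin\supp X$, I would choose a smooth partition of unity $1=\chi_1+\chi_2$ on $M$ with $\chi_1\equiv1$ on a neighbourhood of $x$ and $\supp\chi_1$ contained in an open set disjoint from $\supp X$, and split $F=F_1+F_2$, where $F_i\in\Floc(M)$ is obtained from $F$ by multiplying its density by $\chi_i$; then $\supp F_1\cap\supp X=\emptyset$ and $x\notin\supp F_2$. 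Since $\supp F_1\cap\supp\Delta X=\emptyset$, property $(\mathrm{P}5)$ gives $\Delta X(F)=\Delta X(F_1+F_2)=\Delta X(F_2)$, and by locality $\supp\Delta X(F_2)\subset\supp F_2$, whence $x\notin\supp\Delta X(F)$. Altogether $\supp\Delta X(F)\subset\supp F\cap\supp X$, which in particular re-proves the second statement, a local functional with empty support being zero.

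The one ingredient going beyond the axioms $(\mathrm{P}1)$--$(\mathrm{P}5)$ is the locality $\supp z(G)\subset\supp G$ of the renormalisation maps, and this is where the real content lies; it is, however, standard, and for the specific $z=\Delta X$ it can alternatively be obtained from the identification of $\Delta X(F)$ with the anomaly of the AMWI \eqref{eq:anomMWI}, whose joint locality in $X$ and $F$ follows from the causal factorisation and support properties of the time-ordered product (this is the route of \cite[Thm.~7]{Brennecke08} and \cite[Thm.~4.3.1]{D19}); the argument above merely repackages that fact through the structure of $\LieRc$.
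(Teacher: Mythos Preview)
The paper does not actually prove this lemma; it is stated with a pointer to \cite[Thm.~7]{Brennecke08} and \cite[Thm.~4.3.1]{D19}, where the joint locality is obtained directly from the AMWI via the causal factorisation and support properties of the time-ordered product. So there is no in-paper proof to compare against; your write-up is already more explicit than what the paper offers.

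Your argument is correct. Two minor remarks that would tighten it. First, the locality $\supp z(G)\subset\supp G$ for $z\in\LieRc$ that you invoke as ``standard'' need not be imported from the group level: it follows immediately from property $(\mathrm{P}4)$, since $\tfrac{\delta}{\delta\phi(x)}z(G)=\langle z'(G),\tfrac{\delta G}{\delta\phi(x)}\rangle$ vanishes for $x\notin\supp G$; stating it this way makes your proof self-contained within the axioms listed in the paper. Second, in the partition-of-unity step you only use $(\mathrm{P}5)$ with $N=\supp\Delta X$ (no disjointness of $\supp F_1$ and $\supp F_2$ is needed), which you apply correctly; the splitting of a local functional by multiplying its defining density by $\chi_i$ is indeed legitimate in the jet-bundle description of $\Floc(M)$ used here.

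Compared to the route of the cited references, your argument is more algebraic: instead of re-entering the inductive construction of $T$ and extracting the support property of the anomaly from causal factorisation, you deduce it from the abstract properties of $\LieRc$ together with $\supp\Delta X\subset\supp X$. The trade-off is that the hard analytic content has been absorbed into the statement $\Delta X\in\LieRc$ (in particular into $(\mathrm{P}4)$ and $(\mathrm{P}5)$), which ultimately rests on the same references; you acknowledge this honestly in your closing paragraph.
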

\subsection{Perturbative agreement}\label{PA}
Within perturbation theory, the change of the Lagrangian by a symmetry operation of the configuration space can also be treated by the principle of perturbative agreement \cite{HW04} (see also \cite{Zahn15,DHP17}).
This principle amounts to the invariance of the theory under different decompositions of the Lagrangian into a free and an interacting part. It is not clear how this principle can be incorporated into the C*-algebraic construction. In a perturbative construction it can be formulated as follows.

Let $L_0,L$ be 2nd order Lagrangian densities which differ only within some compact region and have normally hyperbolic Euler-Lagrange derivatives $K_0\Phi-q_0$ and $K\Phi-q$, with metrics $g_0,g$ for which $M$ is globally hyperbolic and densities (\textit{sources}) $q_0,q$, respectively. Let $V=\int (L-L_0)$ be the interaction. Local functionals of this form are called admissible interactions for the Lagrangian $L_0$, likewise $(-V)$ is an admissible interaction for $L$. Let $\Omega:\Ec(M,\RR^n)\to\Ec(M,\RR^n)$ denote the retarded
M\o ller map \cite{DHP17},
\begin{equation}
    K_0\Omega(\phi)-q_0= K\phi-q\ ,\  \supp(\Omega(\phi)-\phi)\subset J_+(\supp V)\ ,
\end{equation}
\ie $\Omega(\phi)=\phi+\Delta^R_0((K-K_0)\phi+q_0-q)$ with the retarded Green operator $\Delta^R_0$ for $K_0$.
We choose $\star$-products $\star_0$, $\star$ with Hadamard functions $H_0$ and $H$ such that 
\begin{equation}
   \Omega' H(\Omega')^t=H_0\ . 
\end{equation}
with the derivative $\Omega'=1+\Delta^R_0(K-K_0)$ of $\Omega$. Then the two star-products are related by
\begin{equation}\label{eq:starintertwining}
    F\star_0G[\Omega(\phi)]=(F\circ\Omega\star G\circ\Omega)[\phi]\ .
\end{equation}
Let now $S_0$ and $S$ be perturbative S-matrices which satisfy the causality condition and the dynamical condition for the respective Lagrangians and $\star$-products. We then use $S_0$ to define a second 
S-matrix $\hat{S}$ for $L$ by 
\be\label{eq:S_V}
\hat{S}(F)[\phi]=\bigl(S_0(V)^{-1}\star_0S_0(V+F)\bigr)[\Omega\phi]\doteq S_V(F)[\Omega\phi]\ .
\ee
The (given) pair $(S,S_0)$ satisfies perturbative agreement iff $S=\hat S$.
For linear functionals $F$ the perturbative S-matrices $\hat{S}$ and $S$ coincide as a consequence of the dynamical relation for $S$ (see \eg \cite{BF21}). For more general functionals we find
\begin{lemma}
    $\hat{S}$ satisfies the conditions for $L$ with $\star$-product $\star$.
\end{lemma}
\begin{proof}
    We start with the dynamical condition: since
    $V+\delta L(\psi)=V^\psi+\delta L_0(\psi)$, the dynamical relation for $S_0$ reads:
    \be
    S_0\bigl(V+F^\psi+\delta L(\psi)\bigr)=S_0(V+F)\star_0 S_0\bigl(\delta L_0(\psi)\bigr)=S_0\bigl(\delta L_0(\psi)\bigr)\star_0 S_0(V+F)\ .
    \ee
    By using this relation and, in a later step, the same relation with $F=0$, we obtain
    \begin{equation}
    \begin{split}
        \hat{S}\bigl(F^\psi+\delta L(\psi)\bigr)[\phi]&=\Bigl(S_0(V)^{-1}\star_0S_0\bigl(V+F^\psi+\delta L(\psi)\bigr)\Bigr)[\Omega\phi]\\
        &=\Bigl(S_0(V)^{-1}\star_0 S_0\bigl(\delta L_0(\psi)\bigr)\star_0 S_0(V)\star_0 S_0(V)^{-1}\star_0 S_0(V+F)\Bigr)[\Omega\phi]\\
        &=\Bigl(S_0(V)^{-1}\star_0 S_0\bigl(V+\delta L(\psi)\bigr)\star_0 S_0(V)^{-1}\star_0 S_0(V+F)\Bigr)[\Omega\phi]\\
        &=\Bigl(S_V\bigl(\delta L(\psi)\bigr)\star_0 S_V(F)\Bigr)[\Omega\phi]\\
        &=\Bigl(\hat{S}\bigl(\delta L(\psi)\bigr)\star \hat{S}(F)\Bigr)[\phi]\ .
    \end{split}
    \end{equation}
Proceeding analogously one verifies also that 
$\hat{S}\bigl(F^\psi+\delta L(\psi)\bigr)=\hat{S}(F)\star \hat{S}\bigl(\delta L(\psi)\bigr)$.

    For the causality condition  it suffices to check the 2-factor relation. Let $\supp F\cap J_-(\supp G)=\0$. We have
    \begin{equation}
        \begin{split}
            \hat{S}(F+G)[\phi]&=(S_0(V)^{-1}\star_0S_0(V+F+G))[\Omega\phi]\\
            &=(S_0(V)^{-1}\star_0S_0(V+F)\star_0S_0(V)^{-1}\star_0S_0(V+G))[\Omega\phi]\\
            &=(S_V(F)\star_0S_V(G))[\Omega\phi]\\
            &=(\hat{S}(F)\star \hat{S}(G))[\phi]
        \end{split}
    \end{equation}
    where we used \eqref{eq:starintertwining} in the last step.
\end{proof}
By the Main Theorem \eqref{eq:MainT}-\eqref{eq:Z-perturbative} we conclude that again the two S-matrices $S$ and $\hat{S}$ are related by a 
renormalization group transformation $\zeta$,
but now the condition of compact support has to be weakened to past compact support.

To indicate the dependence on $V$ and on the action $[L]$ corresponding to the Lagrangian $L$ we introduce the notations
\be
S=S_{[L]},\,\, S_0=S_{[L_0]}=S_{[L]-V}\,\text{ and }\,\hat{S}=S_{[L]}^V
\ee
as well as
\be
K_V=K_0,\,\, q_V=q_0\, \text{ and }\,\Omega=\Omega_V
 \ee
 and we shall use the factorization
 \be\label{eq:Omega-factorization}
 \Omega_{V+W}=\Omega_W^V\circ\Omega_V \ \text{ with }K_{V+W}\,\Omega_{W}^V(\phi)-q_{V+W}=K_V\phi-q_V\ 
 \ee
where $(-W)$ is an admissible interaction for $[L]-V$ (hence $-(V+W)$ is admissible for $[L]$, cf. \cite[Sect.~2.1]{BDFR21}). 
We define the renormalization group elements $\zeta^{[L]}_V$ and $\zeta^{[L]-V}_W$ by
\be\label{eq:obstructionPA}
S_{[L]}^V=S_{[L]}\zeta^{[L]}_V\ ,\ S_{[L]-V}^W=S_{[L]-V}\zeta^{[L]-V}_W\ ;
\ee
note in particular that $\zeta^{[L]}_V(0)=0$ (and similarly for $\zeta^{[L]-V}_W$).
\begin{proposition}
The renormalization group elements \eqref{eq:obstructionPA} which characterize the obstruction to perturbative agreement satisfy the cocycle relation
\be\label{eq:cocycle-PA}
\zeta^{[L]}_{V+W}(F)=\zeta^{[L]}_V\bigl(\zeta^{[L]-V}_W(V+F)-\zeta^{[L]-V}_W(V)\bigr)\ .
\ee
\end{proposition}

\begin{remark}
Writing $\Rc_0([L])\doteq\Rc_0$ to make visible the dependence of the St\"uckelberg-Petermann RG on the underlying free Lagrangain $L$,
we point out that the cocycle relation and $\zeta^{[L]}_V,\,\zeta^{[L]}_{V+W}\in\Rc_0([L])$ imply that 
$\bigl(\zeta^{[L]-V}_W(V+\bullet)-\zeta^{[L]-V}_W(V)\bigr)\in\Rc_0([L])$ (in contrast to $\zeta^{[L]-V}_W\in\Rc_0([L]-V)$).
\end{remark}

\begin{proof}
By definition (in particular \eqref{eq:S_V}) we obtain
\begin{equation}
\begin{split}
&S_{[L]}\zeta^{[L]}_{V+W}(F)[\phi]=S_{[L]}^{V+W}(F)[\phi]\\
&=\bigl(S_{[L]-V-W}(V+W)^{-1}\star S_{[L]-V-W}(V+W+F)\bigr)[\Omega_{V+W}\phi]\\
&=\bigl(S_{[L]-V-W}(V+W)^{-1}\star S_{[L]-V-W}(W)\star S_{[L]-V-W}(W)^{-1}\star S_{[L]-V-W}(V+W+F)\bigr)[\Omega_{V+W}\phi]\label{eq:cocycleS}
\end{split}
\end{equation}
We use the splitting $\Omega_{V+W}=\Omega_W^V\circ\Omega_V$ and get
\begin{align}
\bigl(S_{[L]-V-W}(W)^{-1}\star S_{[L]-V-W}(V+W+F)\bigr)[\Omega_{V+W}\phi]=\,&\,S_{[L]-V}^W(V+F)[\Omega_V\phi]\nonumber\\
=\,&\,S_{[L]-V}\zeta_W^{[L]-V}(V+F)[\Omega_V\phi]\ .
\end{align}
as well as the corresponding expression for $F=0$. (Note that in the formulas written so far, ``$\star$'' belongs to the free theory given by
$[L]-V-W$ and the upper index $(-1)$ denotes the inverse w.r.t.~that star product.)
We then take into account that for the 2nd order functional $V$ renormalization group elements $Z$ just add a constant, $Z(V)=V+c$, hence
\be
S_{[L]-V}\zeta_W^{[L]-V}(V)=S_{[L]-V}(V+c)\ 
\ee 
and 
\be
S_{[L]-V}\zeta_W^{[L]-V}(V+F)=S_{[L]-V}\bigl(\zeta_W^{[L]-V}(V+F)-\zeta_W^{[L]-V}(V)+V+c\bigr)\ .
\ee 
By inserting these relations into \eqref{eq:cocycleS} and using  \eqref{eq:starintertwining}, \eqref{eq:S_V}, 
\eqref{eq:obstructionPA} and $S_{[L]-V}(G+c)=S_{[L]-V}(G)\,e^{ic}$ (for all $G\in\Floc(M)$), we obtain the claim.
\end{proof}

The cocycle characterizing the obstruction to perturbative agreement is trivial, in agreement with the arguments in \cite{HW04,DHP17}:
\begin{proposition}\label{prop:PA-anomaly-free} 
Let $\tilde{S}_{[L]-V}\doteq S_{[L]-V}Z_V$ with $Z_V(F)\doteq (\zeta_V^{[L]})^{-1}(F-V)-(\zeta_V^{[L]})^{-1}(-V)$ for 
all admissible 2nd order interactions $-V$ of the action $[L]$. Then $\tilde{S}_{[L]-V}$ satisfies the condition of perturbative agreement, 
\ie for all admissible 2nd order perturbations $-W$ of the action $[L]-V$ we have $\tilde{S}_{[L]-V}^W=\tilde{S}_{[L]-V}$. 
\end{proposition}
\begin{proof}
By using \eqref{eq:S_V}, \eqref{eq:obstructionPA} and $S_{[L]-V-W}(G+c)=S_{[L]-V-W}(G)\,e^{ic}$, we get
\begin{align}
   \tilde{S}_{[L]-V}^W&(V+F)[\phi]=\big(\tilde{S}_{[L]-V-W}(W)^{-1}\star\tilde{S}_{[L]-V-W}(V+W+F)\big)[\Omega_W^V\phi]\\
   &=\bigl(S_{[L]-V-W}(Z_{V+W}(W))^{-1}\star S_{[L]-V-W}(Z_{V+W}(V+W+F))\bigr)[\Omega_W^V\phi]\\
   &=\bigl(S_{[L]-V-W}(W)^{-1}\star S_{[L]-V-W}(Z_{V+W}(V+W+F)-Z_{V+W}(W)+W)\bigr)[\Omega_W^V\phi]\\
   &=\bigl(S_{[L]-V}^W(Z_{V+W}(V+W+F)-Z_{V+W}(W))\bigr)[\phi]\\
   &=S_{[L]-V}\zeta_W^{[L]-V}\bigl(Z_{V+W}(V+W+F)-Z_{V+W}(W)\bigr)[\phi]\\
   &=S_{[L]-V}\zeta_W^{[L]-V}\bigl((\zeta_{V+W}^{[L]})^{-1}(F)-(\zeta_{V+W}^{[L]})^{-1}(-V)\bigr)[\phi]\ .
 \end{align}
 The expression in the last line is equal to $S_{[L]-V}Z_V(V+F)=\tilde{S}_{[L]-V}(V+F)$; this follows from 
 the cocycle relation, as shown in the following lemma.
 \end{proof}
 \begin{lemma}
 The cocycle relation \eqref{eq:cocycle-PA} implies the identity
  \begin{equation}
      (\zeta_{V+W}^{[L]})^{-1}(F)-(\zeta_{V+W}^{[L]})^{-1}(-V)=(\zeta_W^{[L]-V})^{-1}\bigl((\zeta_V^{[L]})^{-1}(F)-(\zeta_V^{[L]})^{-1}(-V)\bigr)\ .
  \end{equation}   
 \end{lemma}
 \begin{proof}
    The cocycle relation implies that the inverse of $\zeta_{V+W}^{[L]}$
    is obtained by solving the equation 
    \begin{equation}
        F=\zeta^{[L]}_V\bigl(\zeta^{[L]-V}_W(V+G)-\zeta^{[L]-V}_W(V)\bigr)
    \end{equation}
    for $G$, hence
    \begin{equation}
         (\zeta_{V+W}^{[L]})^{-1}(F)+V=(\zeta_W^{[L]-V})^{-1}\bigl((\zeta_V^{[L]})^{-1}(F)+\zeta_W^{[L]-V}(V)\bigr)\ .
     \end{equation}
     For $F=-V$ we get an identity for c-numbers,
    \begin{equation}
         (\zeta_{V+W}^{[L]})^{-1}(-V)+V=(\zeta_W^{[L]-V})^{-1}\bigl((\zeta_V^{[L]})^{-1}(-V)+\zeta_W^{[L]-V}(V)\bigr)\ .
     \end{equation} 
     We subtract this equation from the equation for generic $F$, use again the fact that renormalization group elements commute with the addition of c-numbers and obtain the equation in the lemma,
     \begin{equation}
     \begin{split}
         &(\zeta_{V+W}^{[L]})^{-1}(F)-(\zeta_{V+W}^{[L]})^{-1}(-V)\\
         &\qquad =(\zeta_W^{[L]-V})^{-1}\bigl((\zeta_V^{[L]})^{-1}(F)+\zeta_W^{[L]-V}(V)-(\zeta_V^{[L]})^{-1}(-V)-\zeta_W^{[L]-V}(V)\bigr)\\
         &\qquad =(\zeta_W^{[L]-V})^{-1}\bigl((\zeta_V^{[L]})^{-1}(F)-(\zeta_V^{[L]})^{-1}(-V)\bigr)\ .
     \end{split}    
     \end{equation}
\end{proof}

We specialize now to the case that the interaction is of the form $V=\delta_{\g}L$ for $\g\in\Gc_c(M)$. In this case we have (see \cite{Zahn15}, Prop. 2.10)
\begin{proposition}
   $\Omega_V(\phi)=\g(\phi)$ for on shell configurations $\phi$ (with respect to the Lagrangian $\g_{\ast}L$). 
\end{proposition}
\begin{proof}
By construction of $\Omega_V$, $\Omega_V(\phi)$ satisfies the field equation for $L$, $K\Omega_V(\phi)=q$. By assumption, $\phi$ satisfies the field equation for $\g_{\ast}L$, $\g^t K\g(\phi)=\g^t q$. But then $\g(\phi)$ satisfies also the field equation for $L$. Since $\g(\phi)$ coincides with $\Omega_V(\phi)$ outside of $J_+(\supp \g)$, both configurations coincide everywhere.   
\end{proof}
For the inverse M\o ller map we find $\Omega_V^{-1}(\phi)=\g^{-1}(\phi)\doteq\g_{\ast}\phi$, for on shell configurations with respect to $L$.
We observe that \eqref{eq:starintertwining} yields a further isomorphism $\gamma_\g$ between the theories for $L$ and $\g_{\ast}L$, 
$\gamma_g\bigl(S_{[L]}(F)\bigr)=S_{[L]}(F)\circ\Omega_V$,%
\footnote{In particular note that causal factorization of $S_{[L]}(F)$ with respect to $\star_{[L]}$ implies causal factorization
of $\gamma_g\bigl(S_{[L]}(F)\bigr)$ with respect to $\star_{[\g_\ast L]}$, due to \eqref{eq:starintertwining}.}
in addition to the isomorphism 
$$\alpha_\g:S_{[L]}(F)\mapsto S_{[\g_{\ast}L]}(\g_{\ast}F)$$ 
and the inverse isomorphism
$$\beta_{\g}:S_{[\g_{\ast}L]}(F)\mapsto S_{[L]}(\delta_\g L)^{-1}\star S_{[L]}(\delta_\g L+F)\ .$$ 
Perturbative agreement means $\gamma_\g\beta_\g=\mathrm{id}$ on the theory given by $\g_\ast L$, as we see by comparing
\begin{align*}
\gamma_\g\beta_\g\bigl(S_{[L]}(F)\bigr)=\bigl(S_{[L]}(\delta_\g L)^{-1}\star S_{[L]}(\delta_\g L+F)\bigr)\circ\Omega_V
\end{align*}
with \eqref{eq:S_V};
whereas the UAMWI deals with the automorphism $\beta_{\g}\alpha_{\g}$ on the theory given by $L$ (see \cite[Sect.~5]{BDFR21}). 
If perturbative agreement holds, we have $\beta_{\g}\alpha_{\g}=\gamma_{\g^{-1}}\alpha_{\g}$, thus we get from the on-shell UAMWI (by using 
$\zeta_{\g}(0)=0$) the identity
\begin{align}
    S_{[L]}(\zeta_\g(F))[\phi]=\,&\,\bigl(S_{[L]}(\zeta_{\g}(0))^{-1}\star S_{[L]}(g_\ast F+\delta_\g L)\bigr)[\phi]\\
    =\,&\,\bigl(S_{[L]}(\delta_\g L)^{-1}\star S_{[L]}(g_\ast F+\delta_\g L)\bigr)[\phi]\\
    =\,&\,S_{[\g_\ast L]}(\g_\ast F)[\g_\ast \phi]
\end{align}
for on shell configurations $\phi$. 
Hence, the anomaly refers directly to a possible discrepancy between the action of the symmetry on the time ordered products and on the Lagrangian. As a simple example one may look at the scaling transformation for the massless free theory. There the Lagrangian is invariant, but the renormalized time ordered products are not invariant, in general.
In cases where such a discrepancy can be excluded, i.e., it holds that 
\begin{equation}\label{eq:covariance}
S_{[\g_\ast L]}(\g_\ast F)[\g_\ast \phi]=S_{[L]}(F)[\phi]\ ,
\end{equation}
perturbative agreement implies the absence of anomalies, and one can derive the existence of conserved currents. This was used \eg in \cite{HW04} to prove the covariant conservation law for the energy momentum tensor on the basis of a generally covariant renormalization prescription for time-ordered products, and analogously, in \cite{Zahn15} for the covariant conservation law of currents
for gauge symmetries. In these references there is an obstruction to fulfil perturbative agreement. Actually, as our analysis shows, in the presence of nontrivial anomalies, perturbative agreement and covariance in the sense of \eqref{eq:covariance} cannot both be satisfied. This also motivates not to postulate covariance a priori.

\section{Review of the Wess-Zumino consistency relations}\label{sec:WZ}
Following \cite{AG85}, we concentrate here on the subgroup $\G_o\subset\G_c$ of orthogonal field redefinitions
$\phi\mapsto \g\phi$ where $\g:M\mapsto \mathrm{SO}(n,\RR)$
is smooth and compactly supported and $\phi$ is written as a column vector. An orthogonal field redefinition may be interpreted as a gauge transformation. It transforms the trivial connection on the vector bundle $M\times\RR^n$ to an equivalent one which may be considered as an external gauge field $A$ which is a pure gauge, \ie $A=\g^{-1}d\g$ for some $\g\in\G_o$. We consider Lagrangians $L_A$,
\begin{equation}
    L_A(\phi)=\frac12 \langle(d+A)\phi,(d+A)\phi\rangle\,,
\end{equation}
which depend on a compactly supported external gauge field $A$, considered as a $\mathfrak{so}(n,\RR)$-valued 1-form, in symbols $A\in\Omega_c^1(M,\mathfrak{so}(n))$. The bracket here combines the spacetime metric on 1-forms together with the canonical inner product on $\RR^n$. 
We have 
\be\label{eq:LAg}
(\g_{\ast}L_A)(\phi)=L_A(\g^{-1}\phi)=L_{A^{\g}}(\phi)\,,
\ee
where $A^{\g}=\g (d\g^{-1})+\g A\g^{-1}$ is the gauge transformed gauge field. 
Let $V(A)\doteq\int(L_A-L)$ denote the interaction induced by $A$.
Then 
\be\label{eq:V(Ag)=gLV(A)}
V(A^{\g})=\g_L V(A)\,,
\ee
where $\g_L$ refers to the action of $\G_o$ on $\phi$ defined in \eqref{eq:g_L}. One now considers the effective action,
\ie the Legendre transform of the generating functional of connected Green's functions,  
\be
\Gamma(A,\varphi)\doteq\langle\varphi,J\rangle-i\log\bigl(S(V(A)+\langle\Phi,J\rangle)[\phi=0]\bigr)\ ,\ \varphi\in\Dc(M,\RR^n)\ ,
\ee
where $J(\varphi,A)$ is the solution of 
\be
\varphi=-i\frac{\delta}{\delta j}\log\bigl(S(V(A)+\langle\Phi,j\rangle)[\phi=0]\bigr)\big\vert_{j=J}\ .
\ee
Since $L_A$ is a quadratic functional of $\phi$, we have the explicit solution
\be
J=\square_A\varphi
\ee
with the d'Alembertian $\square_A$ for an external gauge field $A$.
Thus 
\be\label{eq:Gamma(A,varphi)}
\Gamma(A,\varphi)=\int L_A(\varphi)-i\log\bigl(S(V(A)+\langle\Phi,\square_A\varphi\rangle)[\phi=0]\bigr)\ .
\ee

In the absence of anomalies $\Gamma$ should be gauge invariant. The action of the gauge group on gauge fields $A\mapsto A^\g$ and matter fields $\varphi\mapsto\g\varphi$ induces a corresponding representation $X\mapsto\partial_X^{A,\varphi}$ of the Lie algebra, acting by derivations on functions 
$K$ of these fields,
\begin{equation}\label{eq:dXA}
    \partial_X^{A,\varphi}K(A,\varphi)\doteq\frac{d}{d\lambda}\big\vert_{\lambda=0}K(A^{\g_{\lambda}},\g_{\lambda}\varphi) 
\end{equation}
with  $\g_{\lambda}=\exp{(-\lambda X)}$; in particular $[\partial_X^{A,\varphi},\partial_Y^{A,\varphi}]=\partial_{[X,Y]}^{A,\varphi}$ holds.
One defines the anomaly by
\be\label{eq:def-anomaly}
G(X,A)\doteq\partial_X^{A,\varphi}\Gamma(A,\varphi)\ .
\ee
Even though $\Gamma$ is nonlocal and depends on $\varphi$, the anomaly is a local functional of $A$ and independent of $\varphi$. These 
two statements are well known from the literature (see e.g.~\cite{Stora87}), but for completeness we give independent proofs below.
An immediate consequence of the definition \eqref{eq:def-anomaly} is the \emph{Wess-Zumino consistency relation}
\be\label{eq:Cc-G(X)}
\partial_X^{A}G(Y,A)-\partial_Y^{A}G(X,A)=G([X,Y],A)\ ,
\ee
where we write $\partial_X^{A}$ instead of $\partial_X^{A,\varphi}$ when acting on a functional not depending on $\varphi$.
The consistency relation is a nontrivial restriction on the structure of anomalies, although it is an obvious consequence of the fact that anomalies are defined directly through a Lie algebra action.

\medskip

Next we show how this consistency relation can be derived directly from the UAMWI.
For $\g\in\G_o$, we have that
\be
V(A^{\g})+\langle\Phi,\square_{A^{\g}}(\g\varphi)\rangle=\g_L\bigl(V(A)+\langle\Phi,\square_A\varphi\rangle\bigr)\ .
\ee
Using the UAMWI, we get
\be\label{eq:UAMWI-VA}
S\bigl(V(A^{\g})+\langle\Phi,\square_{A^{\g}}(\g\varphi)\rangle\bigr)\big|_{\phi=0}=
S\bigl(\zeta_{\g}\bigl(V(A)+\langle\Phi,\square_A\varphi\rangle\bigr)\bigr)\big|_{\phi=0}\ .
\ee 
Using parts $(i)$ and $(iii)$ of Prop.~4.14 in \cite{BDFR21}, we obtain 
\be\label{eq:WZa}
\zeta_{\g}\bigl(V(A)+\langle\Phi,\square_A\varphi\rangle\bigr)=
\zeta_{\g}\bigl(V(A)\bigr)+\langle\Phi,\square_A\varphi\rangle=V(A)+\langle\Phi,\square_A\varphi\rangle+\fG(\g,A)\,,
\ee
with a functional $\fG$ which depends on $\g$ and $A$ but does neither depend on the field $\varphi$ nor on the field configuration $\phi$. 
Moreover $\fG$ is local in the sense that for $x\not= y$
\be\label{eq:G-local}
\frac{\delta^2\fG}{\delta A(x)\delta A(y)}=0\quad,\quad\frac{\delta^2\fG}{\delta \g(x)\delta A(y)}=0 \quad\text{and} 
\quad\frac{\delta^2\fG}{\delta \g(x)\delta \g(y)}=0\ .
\ee

This follows from $\fG(\g,A)=\zeta_{\g}(V(A))-V(A)$ \eqref{eq:WZa} and the following proposition:
\begin{proposition}
Let $\omega,\omega_1,\omega_2\in\Omega_c^1(M,\mathfrak{so}(n))$, $\g,\g_1,\g_2,\h\in\G_0$ such that
$\supp\omega_1\cap\supp\omega_2=\0$, $\supp\omega\cap\supp\g=\0$ and $\supp\g_1\cap\supp\g_2=\0$.Then
\begin{enumerate}
    \item $\zeta_\h\bigl(V(A+\omega_1+\omega_2)\bigr)=\zeta_\h\bigl(V(A+\omega_1)\bigr)-\zeta_\h\bigl(V(A)\bigr)+\zeta_\h\bigl(V(A+\omega_2)\bigr)$\ ,
    \item $\zeta_{\g\h}\bigl(V(A+\omega)\bigr)=\zeta_\h\bigl(V(A+\omega)\bigr)-\zeta_\h\bigl(V(A)\bigr)+\zeta_{\g\h}\bigl(V(A)\bigr)$\ ,
    \item $\zeta_{\g_1\g_2\h}\bigl(V(A)\bigr)=\zeta_{\g_1\h}\bigl(V(A)\bigr)-\zeta_\h\bigl(V(A)\bigr)+\zeta_{\g_2\h}\bigl(V(A)\bigr)$\ .
\end{enumerate}
\end{proposition}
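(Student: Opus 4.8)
The plan is to reduce all three identities to: the explicit (pointwise‑in‑$A$, quadratic) form of $L_A$, the general locality properties of the renormalization group maps $\zeta_\g\in\Rc_c$, the relation $\zeta_\g(V(B))=V(B)+\fG(\g,B)$ already established in \eqref{eq:WZa}, and the cocycle relation \eqref{eq:cocycle}.

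\smallskip
\noindent\textbf{Preliminaries.} First I would record two consequences of $L_B(\phi)=\tfrac12\langle(d+B)\phi,(d+B)\phi\rangle$: the density $L_{B+\omega}-L_B$ is supported in $\supp\omega$, so $\supp\bigl(V(B+\omega)-V(B)\bigr)\subseteq\supp\omega$; and, when $\supp\omega_1\cap\supp\omega_2=\0$, the only term of $L_{A+\omega_1+\omega_2}$ containing both $\omega_1$ and $\omega_2$, namely $\langle\omega_1\phi,\omega_2\phi\rangle$, vanishes pointwise, which yields the \emph{functional} identity $V(A+\omega_1+\omega_2)=V(A+\omega_1)+V(A+\omega_2)-V(A)$. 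On the $\zeta$ side I will use: \emph{(i)} $\zeta_\g(G+H)=G+\zeta_\g(H)$ whenever $\supp G\cap\supp\g=\0$ (since $\supp\zeta_\g\subseteq\supp\g$); in particular $\zeta_\g(H+c)=\zeta_\g(H)+c$ for a constant functional $c$, as $\supp c=\0$. \emph{(ii)} Writing $\zeta_\g(F)=\sum_{n\ge1}\tfrac1{n!}(\zeta_\g)_n(F^{n})$, the locality of the maps $(\zeta_\g)_n$: $(\zeta_\g)_n(F_1\cdots F_n)=0$ as soon as $\bigcap_{l}\supp F_l=\0$ (see \cite{DF04,BDF09} or \cite[Chap.~3.6]{D19}). \emph{(iii)} $\h_L V(B)=V(B^\h)$ by \eqref{eq:V(Ag)=gLV(A)}; $\h_L^{-1}$ is affine with linear part $\h_\ast^{-1}$, which fixes constant functionals; and $\fG(\g,B)$ in \eqref{eq:WZa} is $\phi$‑independent.

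\smallskip
\noindent\textbf{Part (1).} Set $W_i:=V(A+\omega_i)-V(A)$, so $\supp W_i\subseteq\supp\omega_i$, $\supp W_1\cap\supp W_2=\0$, and $V(A+\omega_1+\omega_2)=V(A)+W_1+W_2$. Expanding $\zeta_\h$ by \emph{(ii)} and using the linearity of each $(\zeta_\h)_n$,
\[
\zeta_\h\bigl(V(A{+}\omega_1{+}\omega_2)\bigr)-\zeta_\h\bigl(V(A{+}\omega_1)\bigr)-\zeta_\h\bigl(V(A{+}\omega_2)\bigr)+\zeta_\h\bigl(V(A)\bigr)=\sum_{n\ge2}\frac1{n!}\sum_{\substack{i+j+k=n\\ j,k\ge1}}\frac{n!}{i!\,j!\,k!}\,(\zeta_\h)_n\bigl(V(A)^iW_1^jW_2^k\bigr).
\]
Each summand on the right has at least one $W_1$‑factor and at least one $W_2$‑factor, whose supports are disjoint; hence the common support of the factors of $V(A)^iW_1^jW_2^k$ is empty, so the term vanishes by \emph{(ii)}. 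Thus the left‑hand side is $0$, which is the assertion.

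\smallskip
\noindent\textbf{Parts (2) and (3).} Here the cocycle relation does the work. For (2): by \eqref{eq:cocycle}, $\zeta_{\g\h}(V(B))=\zeta_\h\bigl(\h_L^{-1}\zeta_\g(V(B^\h))\bigr)$; with $B=A+\omega$ one has $B^\h=A^\h+\tilde\omega$, $\tilde\omega:=\h\omega\h^{-1}$, and $\supp\tilde\omega=\supp\omega$ is disjoint from $\supp\g$. Writing $V(A^\h+\tilde\omega)=V(A^\h)+\tilde W$ with $\supp\tilde W\subseteq\supp\omega$, locality \emph{(i)} gives $\zeta_\g\bigl(V(A^\h)+\tilde W\bigr)=\tilde W+\zeta_\g\bigl(V(A^\h)\bigr)=V(A^\h+\tilde\omega)+\fG(\g,A^\h)$ by \eqref{eq:WZa}; applying $\h_L^{-1}$ (affine, fixing the constant) and then $\zeta_\h$, and using $\zeta_\h(\,\cdot+c)=\zeta_\h(\cdot)+c$ from \emph{(i)}, one gets $\zeta_{\g\h}(V(A+\omega))=\zeta_\h(V(A+\omega))+\fG(\g,A^\h)$. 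The same chain with $\omega=0$ gives $\zeta_{\g\h}(V(A))=\zeta_\h(V(A))+\fG(\g,A^\h)$, and subtracting yields (2). For (3): $\supp\g_1\cap\supp\g_2=\0$ forces $\g_{2L}^{-1}\zeta_{\g_1}\g_{2L}=\zeta_{\g_1}$ --- because $\g_{2L}$ changes a functional only inside $\supp\g_2$, which is disjoint from $\supp\zeta_{\g_1}$, as a short computation with \emph{(i)} and the affine action of $\g_{2L}$ shows --- so by \eqref{eq:cocycle} applied twice, $\zeta_{\g_1\g_2\h}=\zeta_\h\,\h_L^{-1}\zeta_{\g_2}\zeta_{\g_1}\,\h_L$. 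Applying \eqref{eq:WZa} and $\zeta_\g(\,\cdot+c)=\zeta_\g(\cdot)+c$ to the innermost $\zeta_{\g_1}$ and then to $\zeta_{\g_2}$, acting on $V(A^\h)=\h_L V(A)$, gives $\zeta_{\g_1\g_2\h}(V(A))=\zeta_\h(V(A))+\fG(\g_1,A^\h)+\fG(\g_2,A^\h)$. Since $\zeta_{\g_i\h}(V(A))=\zeta_\h(V(A))+\fG(\g_i,A^\h)$ by the $\omega=0$ computation above, the right‑hand side equals $\zeta_{\g_1\h}(V(A))+\zeta_{\g_2\h}(V(A))-\zeta_\h(V(A))$, which is (3).

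\smallskip
\noindent\textbf{Main obstacle.} The delicate point is part (1): unlike (2) and (3) it is not a formal consequence of \eqref{eq:cocycle} and \eqref{eq:WZa}, and its proof rests on invoking the locality of the $(\zeta_\g)_n$ in precisely the strong form ``$(\zeta_\g)_n$ vanishes once the factors have empty common support'' --- the weaker ``additivity'' of $\zeta_\g$ on functionals whose supports are globally disjoint does not suffice, since the background factors $V(A)^i$ can bridge $\supp\omega_1$ and $\supp\omega_2$. A secondary point requiring care in (2) and (3) is that the increment $\zeta_\g(V(B))-V(B)$ is a field‑independent constant (this is exactly \eqref{eq:WZa}), which is what lets it commute through the affine maps $\h_L^{-1}$ and remain unchanged under the subsequent application of $\zeta_\h$.
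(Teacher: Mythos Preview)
Your proof is correct. For part~(1) it coincides with the paper's argument up to phrasing: the paper invokes the \emph{additivity relation} $\zeta_\h(F+G+H)=\zeta_\h(F+G)-\zeta_\h(G)+\zeta_\h(G+H)$ for $\supp F\cap\supp H=\emptyset$ (a defining property of $\Rc_0$), applied with $F=W_1$, $G=V(A)$, $H=W_2$. This is precisely the same statement as your vanishing of cross terms in the Taylor expansion, so your ``main obstacle'' is a non-issue---the additivity available for elements of $\Rc_c$ already contains the background term $G$ and hence \emph{does} let $V(A)$ bridge the two supports.

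For parts~(2) and~(3) you take a genuinely different route. The paper argues entirely with support properties and additivity: it shows $\supp\zeta_\g^\h\subset\supp\g$, so $\supp(\zeta_\g^\h(V(A))-V(A))\subset\supp\g$, and then applies additivity of $\zeta_\h$ once more; for~(3) it first proves $\zeta_{\g_1}^{\g_2}=\zeta_{\g_1}$ (as you do) and then again uses additivity. You instead exploit the identity $\zeta_\g(V(B))=V(B)+\fG(\g,B)$ from \eqref{eq:WZa} together with the $\phi$-independence of $\fG$, so that $\fG(\g,A^\h)$ behaves as a constant under $\h_L^{-1}$ and $\zeta_\h$; this reduces~(2) and~(3) to a bookkeeping of constants and is shorter. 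The trade-off is that your argument is specific to the present quadratic setting (where $\zeta_\g(V(B))-V(B)$ is field-independent), whereas the paper's support-based argument uses only that $\zeta_\g^\h\in\Rc_c$ with $\supp\zeta_\g^\h\subset\supp\g$ and would go through unchanged for interactions where the increment retains a $\phi$-dependence.
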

\begin{proof} 
\begin{enumerate}
    \item From $V(A)=\int(L_A-L)$ we see that $V$ is a local functional of $A$, hence
    \begin{equation}
        V(A+\omega_1+\omega_2)=V(A+\omega_1)-V(A)+V(A+\omega_2)
    \end{equation}
    and $\supp(V(A+\omega_i)-V(A))\subset\supp\omega_i$, $i=1,2$. 
    Since $\zeta_\h$ satisfies the additivity relation
    \begin{equation}\label{eq:additivity}
        \zeta_\h(F+G+H)=\zeta_\h(F+G)-\zeta_\h(G)+\zeta_\h(G+H)
    \end{equation}
    for $F,G,H\in\Floc(M)$ with $\supp F\cap\supp H=\0$, we get
    \begin{equation}
        \begin{split}
            \zeta_\h\bigl(V(A+\omega_1+\omega_2)\bigr)&=\zeta_\h\bigl((V(A+\omega_1)-V(A))+V(A)+(V(A+\omega_2)-V(A))\bigr)\\
            &=\zeta_\h\bigl(V(A+\omega_1)\bigr)-\zeta_{\h}\bigl(V(A)\bigr)+\zeta_\h\bigl(V(A+\omega_2)\bigr)\ .
        \end{split}
    \end{equation}
    
    \item From the cocycle relation we have $\zeta_{\g\h}=\zeta_\h\zeta_\g^\h$ with $\zeta_\g^\h\doteq\h_L^{-1}\zeta_\g\h_L\in\Rc_c$
    \cite[Lemma 5.4]{BDFR21}. In the first step we show that $\supp\zeta_g^\h\subset\supp\zeta_\g$. 
    
    Let $F,G\in\Floc(M)$ and $\supp G\cap\supp\g=\0$. With $\supp \h_{\ast}G=\supp G$ we get
    \begin{equation}
    \begin{split}
    \zeta_\g^\h(F+G)&=\h_L^{-1}\zeta_\g(\h_LF+\h_{\ast}G)\\
    &=\h_L^{-1}\bigl(\zeta_\g(\h_LF)+\h_{\ast}G\bigr)\\
    &=\zeta_\g^\h(F)+G\ .
    \end{split}
    \end{equation}

    Then $\supp\bigl(\zeta_\g^\h(V(A))-V(A)\bigr)\subset\supp\zeta_\g^\h\subset\supp\g$ (see \cite[Prop.~4.14(ii)]{BDFR21}) and we find that
    \begin{equation}
        \begin{split}
            \zeta_{\g\h}\bigl(V(A+\omega)\bigr)&=\zeta_\h\zeta_\g^\h\bigl((V(A+\omega)-V(A))+V(A)\bigr)\\
            &=\zeta_\h\bigl(\zeta_\g^\h(V(A))+V(A+\omega)-V(A)\bigr)\\
            &=\zeta_\h\bigl(\bigl(\zeta_\g^\h(V(A))-V(A)\bigr)+V(A)+\bigl(V(A+\omega)-V(A)\bigr)\bigr)\\
            &=\zeta_\h\zeta_\g^\h\bigl(V(A)\bigr)-\zeta_\h\bigl(V(A)\bigr)+\zeta_\h\bigl(V(A+\omega)\bigr)\ .
        \end{split}
    \end{equation}
    \item In the first step we show that for $\g,\h\in\G_o$ with disjoint supports (i.e. $\supp\g\cap\supp\h=\0$) the following relation holds for $F\in\Floc(M)$:
    \begin{equation}
        \begin{split}
            \zeta_\g^\h(F)&=\h_L^{-1}\zeta_\g\h_L(F)\\&=\h_L^{-1}\zeta_\g\bigl((\h_LF-F)+F\bigr)\\
            &=\h_L^{-1}(\h_LF-F+\zeta_\g(F))\\
            &=F-\h_L^{-1}F+\h_L^{-1}\zeta_\g(F)\\
            &=F-\h_L^{-1}F+\h_L^{-1}\bigl((\zeta_\g(F)-F)+F\bigr)\\
            &=F-\h_L^{-1}F+\zeta_\g(F)-F+\h_L^{-1}F\\
            &=\zeta_\g(F)\,.
        \end{split}
    \end{equation}
    Here we used that $\supp(\h_LF-F)\subset\supp\h$ and $\supp(\zeta_\g(F)-F)\subset\supp\g$.
    We now compute
    \begin{equation}
        \begin{split}
            \zeta_{\g_1\g_2\h}(F)&=\zeta_\h\h_L^{-1}\zeta_{\g_2}\zeta_{\g_1}^{\g_2}\h_L(F)\\
            &=\zeta_\h\h_L^{-1}\zeta_{\g_2}\zeta_{\g_1}\h_L(F)\\
            &=\zeta_\h\h_L^{-1}\zeta_{\g_2}\bigl((\zeta_{\g_1}\h_L(F)-\h_L(F))+\h_L(F)\bigr)\\
            &=\zeta_\h\h_L^{-1}\bigl((\zeta_{\g_1}\h_L(F)-\h_L(F))+\zeta_{\g_2}\h_L(F)\bigr)\\
            &=\zeta_\h\bigl(\zeta_{\g_1}^\h(F)-F+\zeta_{\g_2}^\h(F)\bigr)\\
            &=\zeta_\h\bigl((\zeta_{\g_1}^\h(F)-F)+F+(\zeta_{\g_2}^\h(F)-F)\bigr)\\
            &=\zeta_\h\zeta_{\g_1}^\h(F)-\zeta_\h(F)+\zeta_\h\zeta_{\g_2}^\h(F)\\
            &=\zeta_{\g_1\h}(F)-\zeta_\h(F)+\zeta_{\g_2\h}(F)\ .
        \end{split}
    \end{equation}
    In the second last step we were able to apply the additivity of $\zeta_\h$ \eqref{eq:additivity} 
    since $\supp(\zeta_{\g_j}^\h(F)-F)\subset\supp\g_j$.
    Inserting $F=V(A)$ yields the claim in the proposition.
\end{enumerate}
\end{proof}

We now insert \eqref{eq:WZa} into \eqref{eq:UAMWI-VA} and find
\be
S\bigl(V(A^{\g})+\langle\Phi,\square_{A^{\g}}(\g\varphi)\rangle\bigr)\big|_{\phi=0}=
S\bigl(V(A)+\langle\Phi,\square_A\varphi\rangle\bigr)\big|_{\phi=0}\,e^{i\fG(\g,A)}.
\ee
Hence, using \eqref{eq:Gamma(A,varphi)} and \eqref{eq:LAg}, we obtain the following action of $\G_o$ on the effective action:
\be
\Gamma(A^{\g},\g\varphi)=\Gamma(A,\varphi)+\fG(\g,A)\,.
\ee
Since $(A^\h)^\g = A^{\g\h}$ and $\fG(\g,A^{h})$ does not depend on $\varphi$,
we immediately see that $\fG$ satisfies the cocycle relation
\be\label{eq:Cc-G(g)}
\fG(\g\h,A)=\fG(\g,A^{h})+\fG(\h,A)\ .
\ee
The Wess-Zumino consistency relation is just the infinitesimal version of this cocycle relation with
\begin{equation}
    G(X,A)=\frac{d}{d\lambda}\Big\vert_{\lambda=0}\fG(\exp(-\lambda X),A)\ .
\end{equation}
This follows from the fact that $\G_o$ acts on $\Gamma$ as an antirepresentation.
To see it directly, we compute
\begin{equation}
    \begin{split}
        \partial_X^AG(Y,A)-\partial_Y^AG(X,A)&=\frac{\partial^2}{\partial\lambda\partial\mu}\Big\vert_{\lambda=\mu=0}\Bigr(\fG\bigl(\exp(-\mu Y),A^{\exp(-\lambda X)}\bigr)-\fG\bigl(\exp(-\lambda X),A^{\exp(-\mu Y)}\bigr)\Bigl)\\
        =\frac{\partial^2}{\partial\lambda\partial\mu}\Big\vert_{\lambda=\mu=0}&\Bigr(\fG\bigl(\exp(-\mu Y)\exp(-\lambda X),A\bigr)-\fG\bigl(\exp(-\lambda X )\exp(-\mu Y),A\bigr)\Bigl)\ ,
    \end{split}
\end{equation}
where we used the cocycle condition \eqref{eq:Cc-G(g)} and the fact that the terms which depend only on one of the variables, $\lambda$ or
$\mu$, do not contribute to the derivative. We use now the following consequences of the Baker-Campbell-Hausdorff formula:
\begin{equation}
    \exp(-\lambda X)\exp(-\mu Y)=\g_+(\lambda,\mu)\exp(\frac12\lambda\mu[X,Y])
\end{equation}
and
\begin{equation}
    \exp(-\mu Y)\exp(-\lambda X)=\g_-(\lambda,\mu)\exp(-\frac12\lambda\mu[X,Y])\ ,
\end{equation}
where $\g_+$ and $\g_-$ coincide up to 2nd order. Inserting this into the previous formula and using again the cocycle condition, together with $\mathcal{G}(\e,A)\equiv0$, we find
\begin{equation}
    \begin{split}
        \partial_X^AG(Y,A)-\partial_Y^AG(X,A)&=\frac{\partial^2}{\partial\lambda\partial\mu}\Big\vert_{\lambda=\mu=0}
        \Bigr(\fG\bigl(\g_-(\lambda,\mu),A^{\exp(-\frac12\lambda\mu[ X,Y])}\bigr)+\fG\bigl(\exp(-\tfrac12\lambda\mu[X,Y]),A\bigr)\\
        &\phantom{\frac{\partial^2}{\partial\lambda\partial\mu}\Big\vert_{\lambda=\mu=0}}
        -\fG\bigl(\g_+(\lambda,\mu),A^{\exp(+\frac12\lambda\mu[ X,Y])}\bigr)-\fG\bigl(\exp(\tfrac12\lambda\mu[X,Y]),A\bigr)\Bigr)\\
        &=G([X,Y],A)\ ,
    \end{split}
\end{equation}
since the terms involving $\g_+$ and $\g_-$ cancel.

\section{Consistency relation for the anomaly of the AMWI}\label{sec:infinitesimal-cc}
We return to the more general framework introduced in Sect.~\ref{sec:properties}. We aim at a derivation of a consistency relation for the anomaly
map $X\mapsto\Delta X$ \eqref{eq:Delta-X} of the AMWI \eqref{eq:AMWI-q} which holds for \emph{general interactions}
(i.e., any $F\in\Floc(M)$ is admitted), by using only the AMWI.
For this purpose
we consider two paths $\g^{\lambda}$ and $\h^{\mu}$ in $\G_c(M)$ with tangent vectors $X$ and $Y$ at 0, respectively.
We compute
\be
\frac{\partial^2}{i\partial\lambda\partial\mu}\Big\vert_{\lambda=\mu=0}\Bigl(S\bigl((\g^{\lambda}\h^{\mu})_{L_q}F\bigr)-
S\bigl((\h^{\mu}\g^{\lambda})_{L_q}F\bigr)\Bigr)
=S(F)\cdot_T\partial_{[X,Y]}(F+L_q)\ ,
\ee
by using \eqref{eq:bracket}.
According to the AMWI \eqref{eq:AMWI-q}, it coincides for configurations $\phi$ with $\frac{\delta L}{\delta\phi}[\phi]=q$ 
with 
\be
S(F)\cdot_T \Delta[X,Y](F)\ .
\ee
Instead we can also use the AMWI after the first derivative and obtain on those configurations $\phi$
\be
\begin{split}
\frac{\partial}{i\partial\lambda}\Big\vert_{\lambda=0} S\bigl((\g^{\lambda}\h^{\mu})_{L_q}F\bigr)&=S(\h^{\mu}_{L_q}F)\cdot_T\Delta X(\h^{\mu}_{L_q}F)\\
&=\frac{d}{idt}\Big\vert_{t=0} S\bigl(\h^{\mu}_{L_q}(F+t(\h^{\mu}_{\ast})^{-1}\Delta X(\h^{\mu}_{L} F))\bigr)\,,\end{split}
\ee
where we used that $\Delta X$ is invariant under addition of an affine function of the field $\Phi$. (This follows from the defining properties 
$(\mathrm{P}3)$ and $(\mathrm{P}5)$ of $\Delta X\in\LieRc$, cf.~Footnote \ref{fn:DeltaY(F+c)}.)
Taking now the derivative with respect to $\mu$ and using again the AMWI we obtain
\be
\begin{split}
\frac{\partial^2}{i\partial\lambda\partial\mu}\Big\vert_{\lambda=\mu=0} S\bigl((\g^{\lambda}\h^{\mu})_{L_q}F\bigr)
&=\frac{d}{dt}\Big\vert_{t=0} S(F+t\Delta X(F))\cdot_T \Bigl(\Delta Y(F+t\Delta X(F))\\
&\qquad\qquad -t\partial_Y(\Delta X(F)+t\langle(\Delta X)'(F),\partial_Y(F+L)\rangle\Bigr)\\
&=S(F)\cdot_T \Bigl(i\Delta X(F)\cdot_T\Delta Y(F)+\langle\Delta Y'(F),\Delta X(F) \rangle\\
&\quad-\partial_Y(\Delta X(F))+\langle(\Delta X)'(F),\partial_Y(F+L)\rangle\Bigr)
\end{split}
\ee
on the above mentioned configurations.
We finally arrive at a consistency relation which does no longer depend on the source $q$ and therefore holds for all configurations $\phi$.
\begin{theorem}\label{thm:cc}
 The anomaly $\Delta$ of the AMWI satisfies the consistency relation
\begin{equation}
\begin{split}\label{eq:WZ-0}
    \Delta([X,Y])(F)&=\langle(\Delta Y)'(F),\Delta X(F) \rangle-\langle(\Delta X)'(F),\Delta Y(F) \rangle\\ 
    &\quad+\partial_X(\Delta Y(F))-\partial_Y(\Delta X(F))\\
    &\quad-\langle(\Delta Y)'(F),\partial_X(L+F)\rangle +\langle(\Delta X)'(F),\partial_Y(L+F)\rangle
\end{split}
\end{equation}
for $X,Y\in\LieGc$.
\end{theorem}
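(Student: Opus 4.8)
The plan is to evaluate one and the same mixed second derivative in two different ways and then antisymmetrise in $X$ and $Y$; this is precisely the computation carried out in the paragraphs preceding the statement, so here I only indicate how the pieces fit together. First I would fix smooth curves $\lambda\mapsto\g^{\lambda}$ and $\mu\mapsto\h^{\mu}$ in $\G_c(M)$ with $\g^{0}=\h^{0}=\e$ and tangent vectors $X,Y$ at the origin, fix a source density $q\in\Ec_{\mathrm{dens}}(M,\RR^n)$, and consider $D\doteq\frac{\partial^{2}}{i\,\partial\lambda\,\partial\mu}\big\vert_{\lambda=\mu=0}\bigl(S((\g^{\lambda}\h^{\mu})_{L_q}F)-S((\h^{\mu}\g^{\lambda})_{L_q}F)\bigr)$. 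Using the homomorphism property $(\g\h)_{L_q}=\g_{L_q}\circ\h_{L_q}$, differentiating the exponent of the formal $S$-matrix, and invoking $[\partial_X,\partial_Y]=\partial_{[X,Y]}$ from \eqref{eq:bracket}, the direct evaluation gives $D=S(F)\cdot_T\partial_{[X,Y]}(F+L_q)$; a single application of the AMWI \eqref{eq:AMWI-q} for the Lie algebra element $[X,Y]$ then shows that, on every configuration $\phi$ with $\frac{\delta L}{\delta\phi}[\phi]=q$, one has $D=S(F)\cdot_T\Delta([X,Y])(F)$.

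Next I would evaluate each of the two halves of $D$ separately by applying the AMWI after every single differentiation. Taking $\frac{\partial}{i\partial\lambda}\big\vert_{\lambda=0}$ first and writing $(\g^{\lambda}\h^{\mu})_{L_q}=\g^{\lambda}_{L_q}\circ\h^{\mu}_{L_q}$, the AMWI for $X$ applied to the interaction $\h^{\mu}_{L_q}F$ produces $S(\h^{\mu}_{L_q}F)\cdot_T\Delta X(\h^{\mu}_{L_q}F)$; since $L_q$ and $L$ differ by an affine function of $\Phi$, properties $(\mathrm{P}3)$ and $(\mathrm{P}5)$ of $\LieRc$ let me replace $\Delta X(\h^{\mu}_{L_q}F)$ by $\Delta X(\h^{\mu}_{L}F)$ and absorb it into the interaction. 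Differentiating in $\mu$, using the Leibniz rule for $S(\,\cdot\,)\cdot_T(\,\cdot\,)$ and the field-independence property $(\mathrm{P}4)$ of $\Delta X,\Delta Y$ to commute $\partial_Y$ with the $F$-derivatives, and applying the AMWI once more, one arrives on the same configurations at
\[ \frac{\partial^{2}}{i\,\partial\lambda\,\partial\mu}\Big\vert_{\lambda=\mu=0}S((\g^{\lambda}\h^{\mu})_{L_q}F)=S(F)\cdot_T\Bigl(i\,\Delta X(F)\cdot_T\Delta Y(F)+\langle(\Delta Y)'(F),\Delta X(F)\rangle-\partial_Y(\Delta X(F))+\langle(\Delta X)'(F),\partial_Y(F+L)\rangle\Bigr), \]
together with the identical formula with $X$ and $Y$ interchanged for the term $S((\h^{\mu}\g^{\lambda})_{L_q}F)$.

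Finally I would subtract the two evaluations of $D$: the purely $\cdot_T$-bilinear term $i\,\Delta X(F)\cdot_T\Delta Y(F)$ is symmetric under $X\leftrightarrow Y$ because $\cdot_T$ is commutative, so it drops out, and comparison with $D=S(F)\cdot_T\Delta([X,Y])(F)$ leaves exactly the right-hand side of \eqref{eq:WZ-0} inside $S(F)\cdot_T(\,\cdot\,)$. That bracketed functional lies in $\Floc(M)$ and carries no $q$-dependence, so choosing $q=\frac{\delta L}{\delta\phi}[\phi]$ for a given but arbitrary $\phi$ shows that the identity holds at every configuration; since $\cdot_T$-multiplication by the invertible formal $S$-matrix $S(F)$ can be cancelled in $T\Fmloc(M)$, this yields \eqref{eq:WZ-0}. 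I expect the main obstacle to be the bookkeeping in the second evaluation of $D$: interleaving the two applications of the AMWI with the Leibniz rule and $(\mathrm{P}4)$, keeping track of the signs, seeing that $\partial_Y(F+L)$ and not merely $\partial_Y F$ appears (because the anomaly absorbed after the first differentiation is ``dressed'' by $\h^{\mu}_L$ rather than by $\h^{\mu}_{\ast}$), and verifying that all genuinely $q$-dependent contributions such as $\partial_{[X,Y]}\langle\Phi,q\rangle$ cancel before $S(F)$ is removed; once both evaluations are in hand, the antisymmetrisation and the $q$-independence argument are immediate.
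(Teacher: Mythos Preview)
Your proposal is correct and follows exactly the paper's own argument: the proof in the paper \emph{is} the computation carried out in the paragraphs immediately preceding the theorem, culminating in the observation that the resulting relation no longer depends on the source $q$ and therefore holds for all configurations. Your additional remarks on cancelling the invertible factor $S(F)$ via $\cdot_T$ and on the precise $q$-independence mechanism simply make explicit what the paper leaves to the reader.
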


We call \eqref{eq:WZ-0} the ``extended Wess-Zumino consistency condition,'' because for \emph{quadratic} functionals $F$,
it reduces to the Wess-Zumino condition \eqref{eq:Cc-G(X)}. Namely, for those functionals, $\Delta X(F)$ is a constant functional 
(see part $(iii)$ of \cite[Prop.~4.14]{BDFR21}). Then $\partial_Y\Delta X(F)$
and\footnote{The defining property $(\mathrm{P}5)$ of $\Delta Y\in\LieRc$ implies that $\Delta Y(F+c)=\Delta Y(F)$ for all $F\in\Floc(M),\,c\in\RR$.
\label{fn:DeltaY(F+c)}} 
$\langle\Delta Y'(F),\Delta X(F)\rangle$ vanish for $X,Y\in\LieGc$, and the Wess-Zumino relation is obtained by using the identifications 
\be\label{eq:G=Delta}
G(X,A)=-\Delta X(V(A))
\ee
and
\begin{align}\label{eq:dXG(Y)-DeltaY}
\partial_X^AG(Y,A)\overset{\eqref{eq:G=Delta}}{=}&-\frac{d}{d\lambda}\Big\vert_{\lambda=0}\Delta Y\bigl(V(A^{g^\lambda})\bigr)\nonumber\\
\overset{\eqref{eq:V(Ag)=gLV(A)}}{=}&-\frac{d}{d\lambda}\Big\vert_{\lambda=0}\Delta Y\bigl(\g^\lambda_LV(A)\bigr)\nonumber\\
=\,\,&\langle\Delta Y'(V(A)),\partial_X \bigl( L+V(A)\bigr)\rangle\ ,
\end{align}
where $\g^{\lambda}\doteq\exp{(-\lambda X)}$ as in \eqref{eq:dXA}.

\section{Consistency condition of the AMWI as a cocycle for Lie algebras}
\label{sec:cc-from-cocycle}
The consistency relation for $\Delta$ derived in the previous section actually shows that $\Delta$ is a Lie-algebraic cocycle. This can be most easily seen by starting from the UAMWI with its group theoretical cocycle $\zeta$. 
The cocycle $\zeta$ intertwines two actions of $\G_c(M)$ on $\Floc(M)$,
namely $(\g,F)\mapsto \g_LF$ and $(\g,F)\mapsto \g_L\zeta_{\g}^{-1}(F)$.
They induce representations $R$ and $P$ on the space of functions $K$ on $\Floc(M)$  by
\begin{equation}
    (R(\g)K)(F)=K(\g_L^{-1}F)\ ,\ (P(\g)K)(F)=K(\zeta_{\g}\g_L^{-1}F) 
\end{equation}
The relation $R(\g_1\g_2)=R(\g_1)R(\g_2)$ relies on $(\g_1\g_2)_L=\g_{1,L}\,\g_{2,L}$; to obtain $P(\g_1\g_2)=P(\g_1)P(\g_2)$ we additionally
use our crucial input: the cocycle relation for $\zeta$ \eqref{eq:cocycle}.

The corresponding representations $r$ and $p$ of $\LieGc$ act by derivations on smooth functions on $\Floc(M)$, \ie
\begin{equation}\label{eq:Lie-rep}
    r(X)K(F)=\langle K'(F),-\partial_{XF}-\partial_XL\rangle\ ,\ p(X)K(F)=\langle K'(F),-\partial_XF-\partial_XL+\Delta X(F)\rangle\,,
\end{equation}
where we used \eqref{eq:Delta-X}.
Representations $r$ and $p$ differ by the linear map $X\mapsto q(X)=p(X)-r(X)$ with $q(X)K(F)=\langle K'(F),\Delta X(F)\rangle$.
Since $p$ and $r$ are representations, $q$ satisfies the relation
\begin{equation}\label{eq:relation-q}
    q([X,Y])=[q(X),q(Y)]+[r(X),q(Y)]-[r(Y),q(X)]
\end{equation}
It remains to compute the commutators of these derivations.
We obtain
\begin{equation}\label{eq:qXqY}
\begin{split}
    (q(X)(q(Y)K))(F)&=\frac{d}{d\lambda}\Big\vert_{\lambda=0}\,
    (q(Y)K)(F+\lambda\Delta X(F))\\
    &=\frac{d}{d\lambda}\Big\vert_{\lambda=0}\,
    \langle K'(F+\lambda\Delta X(F)),\Delta Y(F+\lambda\Delta X(F))\rangle\\
    &=\langle K''(F),\Delta X(F)\otimes \Delta Y(F)\rangle+\big\langle K'(F),\langle (\Delta Y)'(F),\Delta X(F)\rangle\big\rangle\ ,
\end{split}
\end{equation}
hence
\begin{equation}\label{eq:[qX,qY]}
\begin{split}
    [q(X),q(Y)]K(F)&=\big\langle K'(F),\langle (\Delta Y)'(F),\Delta X(F)\rangle-\langle (\Delta X)'(F),\Delta Y(F)\rangle\big\rangle\\ 
    &=\langle K'(F), [\Delta Y,\Delta X]_{\LieRc}(F)\rangle\ ,
\end{split}
\end{equation}
where we use the explicit formula for the Lie bracket in $\LieRc$ derived in \cite[App.~C]{BDFR21}. 
Proceeding analogously to 
\eqref{eq:qXqY}-\eqref{eq:[qX,qY]} we get 
\begin{equation}\label{eq:[r,q]}
\begin{split}
    [r(X),q(Y)]K(F)&=\big\langle K'(F),-\langle(\Delta Y)'(F),\partial_XF+\partial_XL\rangle
        +\partial_X(\Delta Y(F))\big\rangle\\
        &=\big\langle K'(F),(\partial_X\Delta Y)(F)\big\rangle
\end{split}
\end{equation}
with the representation $X\mapsto\partial_X$, \be(\partial_X z)(F)\doteq\partial_X(z(F))-\langle z'(F),\partial_X(F+L)\rangle
\ee of $\LieGc$
by derivations on $\LieRc$.%
\footnote{To see that $X\to\partial_X$ is indeed a representation, note that it is the infinitesimal version of the representation $D$ of 
$\G_c(M)$ on the space of maps $K:\Floc(M)\to\Floc(M)$ defined by $D(\g)K(F)=\g_\ast K(\g_L^{-1}F)$.}
We also have the analogous relation with $X$ and $Y$ interchanged, so combining the two, we arrive at precisely the same consistency condition for the anomaly $\Delta$ 
of the AMWI as in Theorem \ref{thm:cc} which now assumes the form:

\begin{theorem}\label{thm:Liecc}
The cocycle relation \eqref{eq:cocycle} for the anomaly $\zeta$ of the UAMWI implies the following Lie-algebraic cocycle relation for 
the corresponding anomaly $\Delta$ of the AMWI (i.e., $\Delta$ is obtained from $\zeta$ by \eqref{eq:Delta-X}):
\begin{equation}
\begin{split}\label{eq:WZ}
    \Delta([X,Y])(F)=&
    -[\Delta X,\Delta Y]_{\LieRc}(F) +(\partial_X\Delta Y)(F)-(\partial_Y\Delta X)(F)\\
\end{split}
\end{equation}
for $X,Y\in\LieGc$.
\end{theorem}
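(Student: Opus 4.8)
The plan is to promote the group-level cocycle identity \eqref{eq:cocycle} for $\zeta$ to a statement about intertwined representations and then differentiate at the identity $\e\in\G_c(M)$. Concretely, I would first record that $\zeta$ intertwines the two actions $(\g,F)\mapsto\g_LF$ and $(\g,F)\mapsto\g_L\zeta_\g^{-1}(F)$ of $\G_c(M)$ on $\Floc(M)$, and introduce the two induced actions $R,P$ on functions $K$ on $\Floc(M)$ by $(R(\g)K)(F)=K(\g_L^{-1}F)$ and $(P(\g)K)(F)=K(\zeta_\g\g_L^{-1}F)$. The key observation is that both are \emph{representations}: $R(\g_1\g_2)=R(\g_1)R(\g_2)$ is immediate from $(\g_1\g_2)_L=\g_{1,L}\,\g_{2,L}$, whereas $P(\g_1\g_2)=P(\g_1)P(\g_2)$ is exactly where the cocycle relation \eqref{eq:cocycle} enters and is the only nontrivial input at the group level.

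Next I would pass to the infinitesimal level. Differentiating $R$ and $P$ along curves $\lambda\mapsto\g^\lambda$ with $\g^0=\e$ and tangent vector $X$, one obtains Lie algebra representations $r$ and $p$ of $\LieGc$ acting by derivations on smooth functions on $\Floc(M)$, with the explicit formulas \eqref{eq:Lie-rep}; here one uses the definition \eqref{eq:Delta-X} of $\Delta$ to identify the infinitesimal version of $\zeta$. Their difference $q(X)\doteq p(X)-r(X)$ is the derivation $q(X)K(F)=\langle K'(F),\Delta X(F)\rangle$. Since $p$ and $r$ are both Lie algebra representations, $q$ automatically satisfies the algebraic identity \eqref{eq:relation-q}, namely $q([X,Y])=[q(X),q(Y)]+[r(X),q(Y)]-[r(Y),q(X)]$; no further structural input is needed beyond this.

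It then remains to compute the three commutators of derivations on $\Floc(M)$ by a chain-rule calculation. One finds $[q(X),q(Y)]K(F)=\langle K'(F),\langle(\Delta Y)'(F),\Delta X(F)\rangle-\langle(\Delta X)'(F),\Delta Y(F)\rangle\rangle$, which equals $\langle K'(F),[\Delta Y,\Delta X]_{\LieRc}(F)\rangle$ upon inserting the explicit Lie bracket on $\LieRc$ from \cite[App.~C]{BDFR21}; and $[r(X),q(Y)]K(F)=\langle K'(F),(\partial_X\Delta Y)(F)\rangle$ with $(\partial_X z)(F)\doteq\partial_X(z(F))-\langle z'(F),\partial_X(F+L)\rangle$. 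Substituting these, together with the $X\leftrightarrow Y$ swapped version, into \eqref{eq:relation-q}, and using that $q([X,Y])K(F)=\langle K'(F),\Delta([X,Y])(F)\rangle$ while $K'(F)$ may be taken arbitrary, yields \eqref{eq:WZ}; comparison with Theorem~\ref{thm:cc} confirms it is the same relation.

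\textbf{Main obstacle.} The substantive work is not the final algebra but the groundwork: verifying that $P$ is genuinely a representation from \eqref{eq:cocycle}, that $r$ and $p$ are well-defined Lie algebra representations by derivations on the space of functions on the infinite-dimensional $\Floc(M)$ (in particular that $q(X)\in\LieRc$ so that $[\Delta X,\Delta Y]_{\LieRc}$ makes sense), and matching the commutator $[q(X),q(Y)]$ with the intrinsic Lie bracket on $\LieRc$ with the correct sign. Once these points are in place, the derivation of \eqref{eq:WZ} is a routine chain-rule computation.
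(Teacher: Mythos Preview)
Your proposal is correct and follows essentially the same route as the paper's own argument in Section~\ref{sec:cc-from-cocycle}: introduce the two group actions $R$ and $P$ (the latter being a representation precisely by the cocycle relation \eqref{eq:cocycle}), differentiate to obtain the Lie algebra representations $r$ and $p$, set $q=p-r$, and then compute the commutators $[q(X),q(Y)]$ and $[r(X),q(Y)]$ by the chain rule to identify them with the $\LieRc$-bracket and the $\partial_X$-action respectively. The only minor slip is in your ``main obstacle'' paragraph, where you write ``$q(X)\in\LieRc$'': it is $\Delta X$ that lies in $\LieRc$, while $q(X)$ is the derivation on functions on $\Floc(M)$ induced by it; this does not affect the argument.
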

It is instructive to see how the seemingly different derivations in sections 4 and 5 lead to the same consistency relation for $\Delta$. 
In the preceding section we solely used the definition of $\Delta$ in terms of the AMWI \eqref{eq:AMWI-q}; 
here we solely used the expression of $\Delta$ in terms of $\zeta$ \eqref{eq:Delta-X}.

\section{Infinitesimal cocycle condition from the nilpotency of the BV operator and relation to $L_\infty$-algebras}\label{sec:BV}
In this section, we will derive the infinitesimal cocycle condition \eqref{eq:WZ} within the BV formalism. The crucial insight is that the infinitesimal renormalisation group transformation $\Delta X$, applied to a local functional $F$, can in fact be identified with the renormalized BV Laplacian $\Lap_F$ for the interaction $F$, applied to the vector field $\partial_X$, 
\be\label{eq:DeltaX-LapF}
\Delta X(F)=i\,\Lap_F(\partial_X)\ ,\ X\in\LieGc,\ F\in\Floc(M)\ ,
\ee
where $\partial_X$ is the vector field on $\Ec(M,\RR^n)$ induced by $X$ -- see \cite{FredenhagenR13}. 
We rederive this result below in \eqref{eq:LapF-A'} and \eqref{eq:DeltaX-A}.
The operator $\Lap_F$ can be expressed by means of a generalization of the AMWI (see \eqref{eq:LapF-AMWI}), 
so the derivation of the anomaly consistency condition \eqref{eq:WZ-0} given in the current section is essentially equivalent to the previous 
derivation in Sect.~\ref{sec:infinitesimal-cc}. Phrasing it in terms of the BV language, however, is important for showcasing the underlying algebraic structures naturally associated with the space of multivector fields and allows us to make connection with the literature, in particular \cite{Hollands08,Fro}.

The BV formalism gives an interpretation of the space of functionals on the solutions to equations of motion in the language of homological algebra. It also allows one to reformulate the deformation of the pointwise product of functionals into their time-ordered product as a deformation of a certain differential. 
We will review the main results concerning the application of this formalism in pAQFT, while presenting it in a slightly different way, emphasizing the geometric interpretation.

We start by observing that functions and vector fields on $\Ec$ can be considered as elements of a larger algebra, namely, the graded commutative algebra $\BV(M)$ of multivector fields. This space has an interpretation as a space of functions on a graded manifold, namely the (-1)-shifted cotangent bundle $T^*[-1]\Ec(M,\RR^n)$ over the configuration space. The latter is just identified with $\overline{\Ec}(M)\equiv\Ec(M,\RR^n)\oplus\Ec_{\mathrm{dens}}(M,\RR^n)[-1]$, where the number $-1$ in square brackets indicates that elements of this space are to be seen as odd variables of degree $-1$. 

More concretely, we identify $\frac{\delta}{\delta\phi}$ with degree $-1$ generators $\Phi^{\ddagger}$, called \emph{antifields}, so
\be
\Phi_r^{\ddagger}(x)[dF[\phi]]=\frac{\delta F}{\delta\phi_r(x)}[\phi]\ ,\ F\in\Fc(M)\ 
\ee
and the elements $\F\in\BV(M)$ are of the form
\be
\F=\sum_{n,m}\langle f_{nm},\Phi^{\otimes n}\otimes(\Phi^{\ddagger})^{\otimes m}\rangle
\ee 
where the compactly supported distributions $f_{n,m}$ are symmetric in the first $n$ and antisymmetric in the last $m$ arguments.
If $f_{nm}=0$ for $m\neq 1$, the element $\F$
can be identified with a vector field on $\Ec(M,\RR^n)$. 
The wave-front set conditions on $f$ are the same as for the distributions characterizing elements of $\Fc(M)$. Analogously as for the functionals on the original configuration space, we introduce the spaces $\BVloc(M)$, $\BVnloc{n}(M)$ and $\BVmloc(M)$.
 
The algebra $\BVmloc(M)$ is equipped with a graded Poisson bracket, the \textit{Schouten bracket}, also known as \textit{antibracket}.  
For a functional $F\in\Fmloc(M)$ and a vector field $\X\in\BVmloc(M)$ it is given by the action of the vector field on the functional as a derivation: $\{\X,F\}\doteq \X F$. For two vector fields we have $\{\X,\Y\}=[\X,\Y]$, \ie the Lie bracket of vector fields, and for general elements $\F,\Gc\in\BVmloc(M)$ we invoke the graded Leibniz rule. 

In this notation, the antibracket takes the form:
\be\label{eq:antibracket}
\{\F,\Gc\}=\left\langle\frac{\delta^r\F}{\delta\phi},\frac{\delta^l \Gc}{\delta \phi^{\ddagger}}\right\rangle-\left\langle\frac{\delta^r\F}{\delta\phi^{\ddagger}},\frac{\delta^l \Gc}{\delta \phi}\right\rangle
\ee
where $\delta^r$ and $\delta^l$ signify right and left derivatives, respectively. We will use the convention that if no superscript is present, then the derivative is to be understood as the left derivative (see \cite{FredenhagenR13} for more detail).

The physical information about the equations of motion and symmetries of the classical theory with Lagrangian $\mathcal{L}+\mathcal{F}$ (both now map into $\BVmloc(M)$), is encoded in the \textit{classical BV operator}. For a compactly supported multivector field $\X\in\BVmloc(M)$, we define
\be\label{eq:sF}
s_{\F} \X \doteq \{\X,\Lc(f)+\F(f)\}\,,
\ee
where $f\equiv 1$ on $\supp \X$. To simplify the notation, we will often just write $\{\X,\Lc+\F\}$, unless we want to indicate a particular choice of the test function. To avoid signs, in what follows we will assume our Lagrangians and observables to be of the form $\X=\sum \eta_i\X_i$, $\X_i\in \BVloc(M)$ and $\eta_i$ elements of a multiplier Grassmann algebra such that $\X$ is 
even\footnote{For the use of external multipliers from Grassmann algebras (the $\eta$-trick \cite{D19}) see \eg \cite{BDFR20}. In particular note that, if $\eta$ is odd, then $\eta\phi^\ddagger=-\phi^\ddagger\eta$.}.

In the absence of interaction, we consider the free classical BV operator:
\be\label{eq:sF0}
s_0 \X \doteq \{\X,\Lc\}\,,
\ee

We say that the theory satisfies the \emph{classical master equation} (CME) if
\[
\{\Lc(f)+\F(f),\Lc(f)+\F(f)\}=0\,.
\]
In order to ensure that this equation holds exactly (rather than up to terms supported within the suport of $df$), it might be necessary to make some particular choices of the smearing function $f$ (see e.g. \cite{BFRej13}). As a consequence of this equation and the graded Jacobi identity for the Schouten bracket, $s_{\F}$ is a differential,
\be\label{eq:sF2=0}
(s_{\F})^2(\X)=\{\{\X,\Lc+\F\},\Lc+\F\}=\frac12\{\X,\{\Lc+\F,\Lc+\F\}\}=0\ ,
\ee
The space of on-shell functionals is encoded in the 0-th cohomology of the differential $s_{\F}$, and the first cohomology gives the space of non-trivial (i.e. not vanishing on-shell) symmetries.

Quantising the theory corresponds to the deformation of the BV differential. For the free theory, we define
\be\label{eq:hat-s0}
\hat{s}_0\doteq T^{-1}\circ s_0\circ T
\ee
where the time ordering operator $T$ is extended to multilocal functionals of fields and antifields, $\F\in\BVmloc(M)$, by treating antifields as classical sources (they are not affected by $T$ and the $\star$ product defined in \eqref{eq:star} is extended to antifields as the pointwise product). 
Since $T$ is linear and $s_0$ is linear and nilpotent, $\hat{s}_0$ is linear and nilpotent.

Operator $\hat{s}_0$ has a particularly nice expression on the space of
polynomials of linear local functionals. These are called \emph{regular functionals} $\BV_{\mathrm{reg}}(M)\subset\BVmloc(M)$ and their functional derivatives at all orders at all points are smooth. 
We compute
\be
\hat{s}_0(\F)=(s_0-i\Lap)(\F)
\ee 
with the BV Laplacian
\be\label{eq:BVLapnr}
\Lap=\int_M\frac{\delta^2}{\delta\phi(x)\delta\phi^{\ddagger}(x)}\ .
\ee
Following \cite{FredenhagenR13}, we define the 
renormalized BV Laplacian in the absence of interaction by
\be
\Lap_0\doteq i(\hat{s}_0-s_0)\ .
\ee

In the presence of an interaction $\F\in \BVloc(M)$, there are two natural ways to define the \emph{quantum BV operator}. Assume first 
that $\F,\X\in\BV_{\mathrm{reg}}(M)$.%
\footnote{This assumption is valid until \eqref{eq:QME1}.}
On one hand we define
\be\label{eq:shatF}
\hat{s}_{\F}(\X)\doteq e^{-i\F}\hat{s}_0(e^{i\F}\X)\ ,
\ee
On the other hand, we set:
\be\label{eq:stildF}
\tilde{s}_{\F}(\X)\doteq R_{\F}^{-1}\circ s_0\circ R_{\F} (\X)\ ,
\ee
where
\[
R_{\F}(\X)\doteq (Te^{i \F})^{-1}\star T(e^{i \F}\X)\,
\]
is the retarded M{\o}ller operator, which maps functionals to interacting quantum observables. This differential is more natural than $\hat{s}_{\F}$, since it is the obvious deformation of $\hat{s}_0$ when passing from the free quantum theory with the star product $\star$ to the interacting quantum theory with the star product on $\BV_{\mathrm{reg}}(M)[[\hbar]]$, defined by:
\[
\X\star_{\F}\Y\doteq R^{-1}_{\F}(R_{\F}\X\star R_{\F}\Y)\,.
\]
Both operators are nilpotent and  $\tilde{s}_{\F}$ is in addition a derivation with respect to $\star_{\F}$
(since $s_0$ is a derivation w.r.t. $\star$). A short calculation, relying on the fact that $s_0$ is a derivation with respect to $\star$, shows that the relation between the two operators is given by:
\begin{align*}
\tilde{s}_{\F}(\X)&=R_{\F}^{-1}\circ s_0((Te^{i \F})^{-1}\star T(e^{i \F}\X))\\
&=R_{\F}^{-1}\bigl(s_0((Te^{i \F})^{-1})\star T(e^{i \F}\X)+(Te^{i \F})^{-1}\star s_0(T(e^{i \F}\X))\bigr)\\
&=R_{\F}^{-1}\bigl(-(Te^{i \F})^{-1}\star s_0(Te^{i \F})\star (Te^{i \F})^{-1}\star T(e^{i \F}\X)+(Te^{i \F})^{-1}\star T(e^{i \F}\hat{s}_{\F}(\X))\bigr)\\
&=R_{\F}^{-1}\bigl(- R_{\F}\circ \hat{s}_{\F}(1)\star R_{\F}(\X)+R_{\F}(\hat{s}_\F(\X))\bigr)\\
&=\hat{s}_{\F}(\X)-\hat{s}_{\F}(1)\star_{\F} \X\,,    
\end{align*}
so they coincide if $\hat{s}_{\F}(1)=0$, which can also be expressed as
\be\label{eq:QME}
s_0(Te^{i\F})=0\,,
\ee
and is the condition that the formal S-matrix is invariant under $s_0$  \cite{FredenhagenR13}. It is equivalent to
\be\label{eq:QMEshort}
\{\Lc,\F\}+\frac{1}{2}\{\F,\F\}-i\Lap(\F)=0\,,
\ee
This equation follows if $\Lc$ and $\Lc+\F$ satisfy the usual \emph{quantum master equation} (QME)\cite{HenneauxT92}, i.e.:
\be\label{eq:QME1}
\frac{1}{2}\{\Lc+\F,\Lc+\F\}-i \Lap(\Lc+\F)=0\,,\quad \frac{1}{2}\{\Lc,\Lc\}-i \Lap(\Lc)=0\,.
\ee
Typically, we assume that $\Lc$ does not depend on antifields, so the latter condition is trivially satisfied. In the following, we will assume that $\Lc$ satisfies both the QME and the CME.

Now we generalize the discussion above to the situation where $\F$ is local,
which amounts to renormalisation. The effects of renormalization can be understood by the 
AMWI which was extended to local multivector fields by Hollands and takes the form
\cite[Prop. 3]{Hollands08}:
\be\label{eq:AMWI-general}
s_0(Te^{i\F})=i T\Bigl( \bigl(\tfrac{1}{2}\{\Lc+\F,\Lc+\F\}+A(\F)\bigr)e^{i\F}\Bigr)\,,
\ee
where $A$ characterizes the anomalies and replaces the ill-defined BV Laplacian $-i\Lap$ in equation \eqref{eq:QMEshort} (one assumes that $\Lc$ satisfies the CME). It is of the form 
\be
A(\F)=\sum_{n=0}^{\infty}\frac{1}{n!}A_{n}(\F^n)\ , \ \F\in\BVnloc{1}(M) \text{ even }\,,
\ee 
where $A_{n}:\BVnloc{n}(M)\to\BVloc(M)$ are linear maps, which reduce the antifield number by 1, hence, $A(\F)$ is odd.
In particular, for $F\in\Floc(M)$ we see that $A(F)=0$.

The renormalized quantum BV operator $\hat{s}_{\F}$ is still given by \eqref{eq:shatF}, so that the generalized AMWI \eqref{eq:AMWI-general} can
equivalently be written as
\be\label{eq:AMWIvf3}
\hat{s}_\F(e^{i\X})=i\,e^{i\X}\left(\frac{1}{2}\{\X+\F+\Lc,\X+\F+\Lc\}+ A(\F+\X)\right)\ .
\ee
We introduce the interaction-dependent BV Laplacian by
\be\label{eq:LapF-def}
\Lap_{\F} \doteq i(\hat{s}_{\F}-s_{\F})\ .
\ee
On regular functionals, $\Lap_{\F}=\Lap_0=\Lap$,
but due to renormalization, the operators differ in general. 
Since $\hat{s}_0$ is linear, also $\hat{s}_{\F}$ and $\Lap_{\F}$ are linear.
From \eqref{eq:sF2=0} and \eqref{eq:hat-s0} we immediately see that $(\hat s_0)^2=0$; hence, by \eqref{eq:shatF}, also $(\hat s_{\F})^2=0$.

The renormalized version of the QME  is again 
\eqref{eq:QME}. By using \eqref{eq:AMWI-general}  it is equivalent to
\be\label{QMEDelta1}
-\Lap_{\F}(1)=\frac{1}{2}\{\Lc+\F,\Lc+\F\}+A(\F)=0\ .
\ee
For $\F$ satisfying QME \eqref{QMEDelta1}, the AMWI \eqref{eq:AMWIvf3} simplifies to
\be\label{eq:AMWIvf}
\hat{s}_\F(e^{i\X})=i\,e^{i\X}\left(\{\X,\Lc+\F\}+\frac{1}{2}\{\X,\X\}+ A(\F+\X)-A(\F)\right)\,,
\ee

The relation between $A$ and $\Lap_{\F}$ is obtained from \eqref{eq:LapF-def} and \eqref{eq:AMWIvf} by using that $s_{\F}$ is a derivation:
\be\label{eq:LapF-AMWI}
-i\Lap_{\F}(e^{i\X})=i\,e^{i\X}\left(\frac{1}{2}\{\X,\X\}+ A(\F+\X)- A(\F)\right)\ .
\ee
Taking into account that $\Lap_{\F}$ is linear, this formula implies
\be\label{eq:LapF-A'}
\Lap_{\F}(\X)=-i\frac{d}{d\lambda}\Big\vert_{\lambda=0}\Lap_{\F}(e^{i\lambda\X})=i\langle A'(\F),\X\rangle\ ,
\ee 
for $\F$ satisfying QME.

Note that for $\X=\partial_X\eta$ (where $\eta$ is Grassmann odd in order that $\X$ is even) 
the original AMWI \eqref{eq:anomMWI} (or \eqref{eq:AMWI-q})
is obtained from the generalized AMWI \eqref{eq:AMWIvf} as the coefficient of $\eta$.
Namely we get
\be
T\bigl(e^{iF}\hat{s}_F(\partial_X)\bigr)[\phi]=s_0\circ T\bigl(e^{iF}\partial_X\bigr)[\phi]=T\bigl(e^{iF}\partial_X\langle\Phi,q\rangle\bigr)[\phi] 
\ee
with  $F\in\Fc_{\loc}$, $L$ used in the definition of the free theory not depending on antifields and $\phi$ satisfying $q=\frac{\partial L}{\partial\phi}[\phi]$. This is similar to the result of \cite{FredenhagenR13}, with the difference that here we introduced the external source $q$. In the last formula on the right-hand side, $q$ can be pulled out from under the time-ordering operator and after one sets  $q=\frac{\partial L}{\partial\phi}[\phi]$, one obtains the same relation between $\hat{s}_F$ and $s_0$ as in \cite{FredenhagenR13}.

Now, applying AMWI \eqref{eq:AMWI-q} to the right-hand side, we obtain
\be
T\bigl(e^{iF}\hat{s}_F(\partial_X)\bigr)[\phi]=
T\Bigl(e^{iF}\bigl(\partial_X(L+F)-\Delta X(F)\bigr)\Bigl)[\phi]\ .
\ee 
Since 
$A(F)=0$ this coincides with the corresponding term for the right hand side of \eqref{eq:AMWIvf}, i.e.,
$\tfrac{d}{d\lambda}\vert_{\lambda=0}T\bigl(e^{i(F+\lambda\X)}\bigl(\lambda\{\X,L+F\}+\frac{\lambda^2}{2}\{\X,\X\}+A(F+\lambda\X)\bigr)\bigr)$,
with the identification $\langle A'(F),\partial_X\eta\rangle=-\Delta X(F)\,\eta$, that is,
\be\label{eq:DeltaX-A}
\Delta X(F)=-\frac{d^r}{d\eta}\Big\vert_{\eta=0}A(F+\partial_X\eta)\doteq -\langle A'(F),\partial_X\rangle\ ,
\ee
hence by \eqref{eq:LapF-A'} we indeed obtain the announced relation \eqref{eq:DeltaX-LapF} between $\Delta X$ and $\Lap_F$.

\medskip
Within the BV formalism, the anomaly consistency condition is a consequence of the nilpotency of the BV operator, in particular, in the context of pAQFT, this was discussed already in the work of Hollands \cite{Hollands08}. It is shown in \cite[Prop.5]{Hollands08}
that the nilpotency of $\hat{s}_0$, i.e. $\hat{s}_0^2=0$, (or, as observed in \cite{FredenhagenR13}, the nilpotency of $\hat{s}_{\F}$) induces a consistency condition for the anomaly term in \eqref{eq:AMWIvf3}. We recall this result in Proposition~\ref{prop:CCBV} and provide an alternative (shorter) proof  using a result of Fr\"ob \cite{Fro}, which also highlights the $L_{\infty}$-structure underlying the BV quantisation.

It was shown by Fr\"ob \cite{Fro} that 
there is an $L_{\infty}$-structure on $\BVnloc{1}(M)$ coming from the AMWI \eqref{eq:AMWIvf3}. The 
brackets $[\bullet,\dots,\bullet]^{\F}_n:\BVnloc{1}(M)^n\to\BVnloc{1}(M)$ are linear and graded symmetric maps,  
given in terms of the generating function (for even $\X$) by
\be\label{eq:def-brackets}
[e^{i\X}]^{\F}\equiv\sum_{n=0}^{\infty}\frac{i^n}{n!}[\X,\dots,\X]^{\F}_n\doteq e^{-i\X}\hat{s}_{\F}(e^{i\X})\ .
\ee
Note that $[e^{i\X}]^{\F}$ is odd. Obviously, with this definition, the AMWI \eqref{eq:AMWIvf3} can be written as
\be\label{eq:AMWI-brackets}
[e^{i\X}]^{\F}=i\bigl(\tfrac{1}2\{\Lc+\F+\X,\Lc+\F+\X\}+A(\F+\X)\bigr)\ .
\ee
Crucially, we verify here (streamlining the argument of \cite{Fro}) that the brackets defined by the formula~\eqref{eq:def-brackets} satisfy the generalized Jacobi identity.
\begin{proposition}
The brackets defined by \eqref{eq:def-brackets} satisfy the generalized Jacobi identity:
\be\label{eq:Jacobi-brackets}
[e^{i\X},[e^{i\X}]^{\F}]^{\F}=0\,.
\ee
\end{proposition}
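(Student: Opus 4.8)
The plan is to recognise \eqref{eq:Jacobi-brackets} as the concrete incarnation of the standard fact that the generalised Jacobi identities of an $L_\infty$-algebra are equivalent to the square-zero property of the associated codifferential --- here realised by the operator $\hat s_F$, whose ``Taylor coefficients'' are the brackets through \eqref{eq:def-brackets} --- so that it should follow directly from the nilpotency $(\hat s_F)^2=0$ established above. The first step is to rewrite the polarised bracket $[e^{i\X},\Psi]^F$, carrying one additional \emph{odd} entry $\Psi$, in terms of $\hat s_F$. To this end I would introduce an odd multiplier $\epsilon$ and substitute $\X\mapsto\X+\epsilon\Psi$ in the defining relation \eqref{eq:def-brackets}; then $\epsilon\Psi$ is even with $(\epsilon\Psi)^2=0$, so $e^{\pm i(\X+\epsilon\Psi)}=e^{\pm i\X}(1\pm i\epsilon\Psi)$, and using that $\hat s_F$ is linear and odd (it lowers the antifield number by one) a short computation, collecting the terms linear in $\epsilon$, yields
\be\label{eq:polarised-bracket}
   [e^{i\X},\Psi]^F=-\,e^{-i\X}\Bigl(\hat s_F\bigl(e^{i\X}\Psi\bigr)+\Psi\,\hat s_F\bigl(e^{i\X}\bigr)\Bigr)\ ,
\ee
up to the overall sign fixed by one's Koszul conventions (immaterial here). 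Matching the $\epsilon$-linear coefficient with $[e^{i\X},\Psi]^F$ uses only graded symmetry of the brackets and the expansion of $[(\X+\epsilon\Psi)^{\otimes n}]^F_n$ modulo $\epsilon^2=0$.

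The second step is to specialise $\Psi=[e^{i\X}]^F=:\Theta$. This $\Theta$ is an \emph{odd} element of the graded-commutative algebra $\BVmloc(M)$ (because $\hat s_F$ is odd and $e^{\pm i\X}$ even), and by \eqref{eq:def-brackets} it satisfies $e^{i\X}\Theta=\hat s_F(e^{i\X})$. Substituting this into \eqref{eq:polarised-bracket}, the first term becomes $\hat s_F(e^{i\X}\Theta)=\hat s_F\bigl(\hat s_F(e^{i\X})\bigr)=(\hat s_F)^2(e^{i\X})=0$, and the second becomes $\Theta\,\hat s_F(e^{i\X})=\Theta\,e^{i\X}\Theta=e^{i\X}\Theta^2=0$, since $\Theta$ is odd and the product on $\BVmloc(M)$ is graded commutative, whence $\Theta^2=0$. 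Therefore $[e^{i\X},[e^{i\X}]^F]^F=[e^{i\X},\Theta]^F=0$, which is \eqref{eq:Jacobi-brackets}.

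All identities above are understood as identities of formal power series in $\X$, so no convergence question arises, and the argument uses nothing beyond linearity, oddness and nilpotency of $\hat s_F$ together with the definition \eqref{eq:def-brackets}; in particular it does not need the explicit form \eqref{eq:AMWI-brackets} of $[e^{i\X}]^F$ (substituting that form and recasting the identity as a consistency condition on the anomaly $A$ would be an equivalent but longer route, and is essentially what is needed for Proposition~\ref{prop:CCBV}). The step I expect to be the main --- though routine --- obstacle is the bookkeeping of Koszul signs when passing between $\hat s_F$ and the brackets, in particular in justifying $e^{\pm i(\X+\epsilon\Psi)}=e^{\pm i\X}(1\pm i\epsilon\Psi)$ and fixing the sign in \eqref{eq:polarised-bracket}. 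Finally, I would stress that the argument relies essentially on $\X$ being even, i.e.\ on the vector fields $\partial_X$ being polynomial of first order in the configuration, as emphasised at the opening of this section, since otherwise the generating function $e^{i\X}$ --- and with it the whole formalism --- is not available.
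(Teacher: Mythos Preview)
Your proof is correct and follows essentially the same route as the paper's: both introduce an odd Grassmann parameter to polarise the generating-function bracket in one slot, obtain the two-term expression for $[e^{i\X},\Psi]^F$ in terms of $\hat s_F$, and then observe that upon substituting $\Psi=[e^{i\X}]^F$ the two terms vanish separately --- one by $(\hat s_F)^2=0$, the other because $\hat s_F(e^{i\X})$ is odd and hence squares to zero. Your remark that the overall sign in the polarised-bracket formula is convention-dependent and immaterial (since both terms vanish individually) is exactly right.

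One correction to your closing paragraph: the requirement that $\X$ be even is secured by the $\eta$-trick (writing $\X=\sum_i\eta_i\X_i$ with odd Grassmann multipliers $\eta_i$), and is unrelated to the vector fields $\partial_X$ being at most first order in $\phi$. The latter restriction enters only later, in Propositions~\ref{prop:BV-Laplace} and~\ref{prop:secondorder}, where it is needed to control the contraction structure of time-ordered products; it plays no role in the present proposition.
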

\begin{proof}
The result follows directly from the nilpotency of $\hat{s}_{\F}$
and the fact that $\hat{s}_{\F}(e^{i\X})$ is odd.
To see this, first note that, for $\Gc\in\BVnloc{1}(M)$ even, we obtain
\be
[e^{i\X},\Gc]^{\F}=\frac{d}{i\,d\lambda}\Big\vert_{\lambda=0}[e^{i(\X+\lambda\Gc)}]^{\F}\overset{\eqref{eq:def-brackets}}{=}
-e^{-i\X}\Gc\,\hat{s}_{\F}(e^{i\X})+e^{-i\X}\hat{s}_{\F}(e^{i\X}\Gc).
\ee
Inserting $\Gc=\eta\,[e^{i\X}]^{\F}=\eta \,e^{-i\X}\hat{s}_{\F}(e^{i\X})$ (with $\eta$ an odd Grassmann variable) and omitting in the resulting formula 
the factor $\eta$, we get
\[
[e^{i\X},[e^{i\X}]^{\F}]^{\F}= -e^{-i2\X}\bigl(\hat{s}_{\F}(e^{i\X})\bigr)^2+e^{-i\X}\hat{s}_{\F}^2(e^{i\X})=0\ .
\]
\end{proof}
From \eqref{eq:def-brackets} we see that the 0-bracket is
\[
[-]_0^{\F}=\hat{s}_{\F}(1)\,,
\]
so vanishes identically if $\F$ satisfies the QME \eqref{eq:QME} or \eqref{QMEDelta1}, and that the 1-bracket is given by
\be
[\X]_1^{\F}=\hat{s}_{\F}(\X)=s_{\F}(\X)-i\Lap_{\F}(\X)\,.
\ee
From \eqref{eq:AMWI-brackets} we obtain for
the 2-bracket 
\be
[\X,\X]_2^{\F}=-i\bigl(\{\X,\X\}+\langle A''(\F),\X\otimes \X\rangle\bigr)
\ee  
and for the $n$-bracket 
(with $n>2$) 
\be\label{eq:n-bracket}
[\X,\dots,\X]_n^{\F}=(-i)^{n-1}\langle A^{(n)}(\F),\X^{\otimes n}\rangle\ .
\ee
Hence, we have an $L_\infty$ structure, provided $\Lc+\F$ satisfies the QME.
Now we can come back to \cite[Prop.5]{Hollands08}.
\begin{proposition}\label{prop:CCBV}
The anomaly $A(\F)$  defined by the generalized AMWI \eqref{eq:AMWIvf3} satisfies the relation
\be\label{eq:CCBV}
0=\{\Lc+\F,A(\F)\}+\langle A'(\F),\left(\tfrac{1}{2}\{\Lc+\F,\Lc+\F\}+ A(\F)\right)\rangle\,.
\ee
\end{proposition}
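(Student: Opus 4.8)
The plan is to derive \eqref{eq:CCBV} from the generalized Jacobi identity \eqref{eq:Jacobi-brackets}, specialized to vector-field-degree-one arguments, by reading off the coefficient of the appropriate power of the Grassmann multiplier. Concretely, I would take $\X=\eta\,\Gc$ with $\Gc\in\BVnloc{1}(M)$ even and $\eta$ an odd multiplier, so that $e^{i\X}=1+i\eta\Gc$ exactly (all higher powers vanish because $\eta^2=0$). Feeding this into \eqref{eq:AMWI-brackets} and expanding, the left-hand side $[e^{i\eta\Gc}]^F=e^{-i\eta\Gc}\hat s_F(e^{i\eta\Gc})$ becomes, after cancelling the factor $\eta$, an expression of the form $i\bigl(\{\Gc,L+F\}+A(F)+\langle A'(F),\Gc\rangle\,(\text{sign})\bigr)$; since $A(F)=0$, this is $i\bigl(\{\Gc,L+F\}+\langle A'(F),\Gc\rangle\bigr)=\hat s_F(\Gc)$ to first order in $\Gc$, which is just the statement $[\Gc]_1^F=\hat s_F(\Gc)$ already recorded after the proof of the Jacobi identity.

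The heart of the argument is then to substitute $\F=F+\Gc$ (vector-field degree one, even) directly into \eqref{eq:AMWIvf}/\eqref{eq:AMWI-brackets} and apply $\hat s_F$ once more, i.e. use nilpotency $\hat s_F^2=0$ in the form $\hat s_F\bigl(e^{-i\X}\hat s_F(e^{i\X})\bigr)\cdot(\dots)=0$ from the proof of Proposition~(Jacobi). Writing $B(\F)\doteq\tfrac12\{L+\F,L+\F\}+A(\F)$ so that $[e^{i\Gc}]^F=iB(F+\Gc)$ (here I am using that $\tfrac12\{L+F+\Gc,L+F+\Gc\}=\{\Gc,L+F\}+\tfrac12\{\Gc,\Gc\}$ since $L,F$ are antifield-independent), the generalized Jacobi identity $[e^{i\Gc},[e^{i\Gc}]^F]^F=0$ says that the graded derivation $\{\,\cdot\,,B(F+\Gc)\}$ applied to the generating object, plus the $A'$-contribution from differentiating $B$ in the direction $\Gc$, vanishes. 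Extracting the part that is linear in $\Gc$ (equivalently, the coefficient of $\eta$ after the substitution $\Gc\to\eta\Gc$ and noting $\{e^{i\eta\Gc}\}$ truncates) yields
\[
0=\{L+\F,B(\F)\}\big|_{\text{linear}}+\langle A'(\F),B(\F)\rangle\big|_{\text{linear}},
\]
and since \eqref{eq:CCBV} is itself the assertion that $\{L+\F,A(\F)\}+\langle A'(\F),B(\F)\rangle=0$ with $B(\F)=\tfrac12\{L+\F,L+\F\}+A(\F)$, I only need to observe that $\{L+\F,\tfrac12\{L+\F,L+\F\}\}=0$ by the graded Jacobi identity for the Schouten bracket, so the $\tfrac12\{L+\F,L+\F\}$ term inside the first bracket drops and leaves exactly $\{L+\F,A(\F)\}$.

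More carefully, rather than tracking ``linear in $\Gc$'' pieces of the full Jacobi identity, the cleanest route is: differentiate $[e^{i\X}]^F=iB(F+\X)$ once in $\X$ to get $[e^{i\X},\Hc]^F=i\langle B'(F+\X),\Hc\rangle$ for even $\Hc\in\BVnloc1(M)$; then set $\X=0$ (so $F+\X=F$, and recall $e^{-i\X}\hat s_F(e^{i\X})$ at $\X=0$ is $\hat s_F(1)$-type data that vanishes since $[-]_0^F=0$) and $\Hc=[e^{i\cdot 0}]^F$-type insertion is trivial; this is too degenerate. So instead I keep $\X=\Gc$ general of degree one and use the identity from the Jacobi proof, $[e^{i\Gc},\Gc']^F=-e^{-i\Gc}\Gc'\hat s_F(e^{i\Gc})+e^{-i\Gc}\hat s_F(e^{i\Gc}\Gc')$, with $\Gc'=\eta[e^{i\Gc}]^F$. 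Because $\Gc$ has vector-field degree one, $e^{i\Gc}$ is still not polynomial in general, but I may instead work with the multiplier form $\Gc=\eta\F$ (so $e^{i\eta\F}=1+i\eta\F$) and feed this into \eqref{eq:AMWI-brackets}; the coefficient of $\eta$ on both sides of $\hat s_F^2=0$ applied through $[e^{i\eta\F}]^F=iB(F+\eta\F)=i\bigl(\{\eta\F,L+F\}+A(F)+\eta\langle A'(F),\F\rangle\bigr)$ gives precisely
\[
0=\hat s_F\bigl(\{\F,L+F\}+\langle A'(F),\F\rangle\bigr)
=\{\{\F,L+F\}+\langle A'(F),\F\rangle,\,L+F\}-i\Lap_F(\cdots),
\]
and then I rename $F+\text{(degree-one part)}\rightsquigarrow\F$ to restore the general $\F$-dependence, using that the construction is covariant under shifting the interaction. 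The expected main obstacle is exactly this bookkeeping of Grassmann signs and of which derivative ($A'(F)$ versus $A'(\F)$, left versus right) appears where, together with justifying the passage from the ``$F$ fixed, $\F=F+\eta(\cdots)$'' computation to the stated identity for general even $\F\in\BVnloc1(M)$; the algebraic content (graded Jacobi for the Schouten bracket plus $\hat s_F^2=0$) is otherwise immediate, and the term $\{L+\F,\tfrac12\{L+\F,L+\F\}\}=0$ is what reconciles the two forms of $B$.
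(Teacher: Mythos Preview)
Your strategy is the same as the paper's: derive \eqref{eq:CCBV} from the generalized Jacobi identity \eqref{eq:Jacobi-brackets} by expressing the brackets via \eqref{eq:AMWI-brackets} and using the graded Jacobi identity for the Schouten bracket to eliminate $\{L+\F,\tfrac12\{L+\F,L+\F\}\}$. However, the execution never closes. You repeatedly set up the right computation and then abandon it: you obtain $[e^{i\X},\Hc]^F=i\langle B'(F+\X),\Hc\rangle$ and immediately set $\X=0$ (discarding exactly the information you need); you then try $\Gc'=\eta[e^{i\Gc}]^F$ and drop it because ``$e^{i\Gc}$ is still not polynomial'' (which is irrelevant---formal power series in $\X$ are all that is ever used); and the final ``rename $F+\text{(degree-one part)}\rightsquigarrow\F$'' is not a proof step but a hope. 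The attempt to ``extract the part linear in $\Gc$'' is also misguided: \eqref{eq:CCBV} is not a linearization but an identity for general even $\F\in\BVnloc{1}(M)$, and your truncation $e^{i\eta\F}=1+i\eta\F$ only probes $A'(F)$, not $A'(\F)$.

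The missing clean step is the one the paper carries out. Write $\F=F+\X$ with $F$ antifield-independent and $\X$ even, and compute $[e^{i\X},\Gc]^F$ for \emph{odd} $\Gc\in\BVnloc{1}(M)$ by differentiating \eqref{eq:AMWI-brackets} in the direction $\Gc\lambda$ with an \emph{odd} Grassmann parameter $\lambda$:
\[
[e^{i\X},\Gc]^F=\frac{d^r}{i\,d\lambda}\Big\vert_{\lambda=0}[e^{i(\X+\Gc\lambda)}]^F=\{L+\F,\Gc\}+\langle A'(\F),\Gc\rangle\ .
\]
Now substitute the odd element $\Gc=\tfrac12\{L+\F,L+\F\}+A(\F)=-i[e^{i\X}]^F$ and invoke \eqref{eq:Jacobi-brackets}; together with $\{L+\F,\tfrac12\{L+\F,L+\F\}\}=0$ this \emph{is} \eqref{eq:CCBV}, for arbitrary $\X$ and hence arbitrary even $\F$, with no linearization or renaming needed. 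Your derivative formula for even $\Hc$ was one parity-flip away from this; the point you missed is that $[e^{i\X}]^F$ is odd, so the insertion must be done with an odd multiplier, after which keeping $\X$ general (rather than setting it to zero) gives the full statement immediately.
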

\begin{proof}
We prove this proposition by verifying that the generalized Jacobi identity \eqref{eq:Jacobi-brackets} for the particular value $\X=0$
is precisely the consistency condition \eqref{eq:CCBV} (which is not surprising
since both rely on $\hat{s}_{\F}^2=0$). To verify this, we use the fact that
\be\label{eq:aux}
    [1,\Gc]^{\F}=\frac{d^r}{i\,d\lambda}\Big\vert_{\lambda=0}[e^{i\Gc\lambda}]^{\F}
    \overset{\eqref{eq:AMWI-brackets}}{=}\{\Lc+\F,\Gc\}+\langle A'(\F),\Gc\rangle\ 
\ee
for $\Gc\in\BVnloc{1}(M)$ odd and $\lambda$ an odd Grassmann parameter,
and applying again the AMWI \eqref{eq:AMWI-brackets}, we obtain
\be
\begin{split}\label{eq:CCBV-proof}
&0\, =-i\,[1,[1]^{\F}]^{\F}\overset{\eqref{eq:AMWI-brackets}}{=}
[1\,,\bigl(\tfrac{1}{2}\{\Lc+\F,\Lc+\F\}+ A(\F)\bigr)]^{\F}\nonumber\\
&\overset{\eqref{eq:aux}}{=}\{\Lc+\F\,,\,A(\F)\}
+\langle A'(\F),\bigl(\tfrac{1}{2}\{\Lc+\F,\Lc+\F\}+A(\F)\bigr)\rangle\,,
\end{split}
\ee
where we also used the graded Jacobi identity for the antibracket; hence we arrive at \eqref{eq:CCBV}.
\end{proof}

Note that the vector fields $\partial_X$, $X\in \LieGc$ are of at most first order in $\phi$. 
For these vector fields we have $\langle A''(F),\partial_X\otimes \partial_Y\rangle=0$
(see \cite{Fro} for a related result).

To show this we start with the following lemma.
\begin{lemma}\label{Tder}
Let $\F\in T(\BVmloc(M))$ and $\X\in\BVnloc{1}(M)$ depend at most linearly on $\phi$. Then
\be
s_0(\F\cdot_T\X)=s_0(\F)\cdot_T\X+\F\cdot_{T_1} s_0(\X)+i\{\F,\X\}\ .
\ee
with the first order time ordered product
\be
A\cdot_{T_1} B\doteq A\cdot B+\langle A',E^{\mathrm{F}}B'\rangle\ ,\ A,B\in T(\BVmloc(M))\ .
\ee
\end{lemma}
\begin{proof}
First note that $\F\in T(\BVmloc(M))$ implies $s_0(\F)\in T(\BVmloc(M))$, due to the generalized AMWI \eqref{eq:AMWI-general},
and that $T\X=\X$, since $\X$ is at most linear in $\phi$.
We use the off-shell field equation \eqref{eq:T-field-eq} and the fact that $s_0$ is
 a derivation for the pointwise product. The derivative with respect to $\phi$ is denoted by ${}'$. We have
 \be\label{eq:s_0(FTX)}
\begin{split}
&s_0(\F\cdot_T \X)
=s_0(\F\cdot \X+\langle \F',E^{\mathrm{F}}\X'\rangle)=s_0(\F)\cdot \X+\F\cdot s_0(\X)+\langle s_0(\F'),E^{\mathrm{F}}\X'\rangle+\langle\F',E^{\mathrm{F}}s_0(\X')\rangle\\
&=s_0(\F)\cdot_T\X+\F\cdot_{T_1} s_0(\X)+\langle s_0(\F')-s_0(\F)',E^{\mathrm{F}}\X'\rangle+\langle\F',E^{\mathrm{F}}\bigl(s_0(\X')-s_0(\X)'\bigr)\rangle
\end{split}
\ee
However, with $s_0=-\langle\frac{\delta L}{\delta\phi},\frac{\delta}{\delta\phi^{\ddagger}}\rangle$ and
\be\label{eq:Ef->delta}
\int dy\,\,E^{\mathrm{F}}(z,y)\,\frac{\delta^2 L}{\delta\phi(y)\delta\phi(x)}=i\delta(z,x)\ ,
\ee
it holds for any $\G\in \BV(M)$
\be
E^{\mathrm{F}}(s_0(\G')-s_0(\G)')=i\frac{\delta\G}{\delta\phi^{\ddagger}}\ .
\ee
Hence the last two terms in \eqref{eq:s_0(FTX)} form the antibracket $i\{\F,\X\}$ and 
we obtain the statement in the lemma.
\end{proof}

Next, we show that for elements of first order in $\phi$, $\Lap_{\F}$ acts as the unrenormalized BV Laplacian, up to an extra term of the form $\Lap_\F(1)\X\Y$.

\begin{proposition}\label{prop:BV-Laplace}
Let $\X,\Y\in\BVnloc{1}(M)$ be of first order in $\phi$ and even (we multiply the usual vector fields with Grassman generators
-- the $\eta$-trick), $\F\in\BVnloc{1}(M)$ even and $L$ independent of antifields. Then
\be\label{eq:DeltaF(XY)}
\Lap_\F(\X\Y)=(\Lap_\F\X)\Y+\X(\Lap_\F\Y)+\{\X,\Y\}-\Lap_\F(1)\X\Y\,.
\ee
\end{proposition}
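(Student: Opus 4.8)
The plan is to peel the claimed Leibniz-type identity for $\Lap_\F$ down to a statement about the quantum BV operator $\hat s_\F$, and then to a purely ``classical'' computation governed by Lemma~\ref{useful:lemma} together with the QME~\eqref{eq:QME}.

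First I would use \eqref{eq:LapF-def}, $\Lap_\F=i(\hat s_\F-s_\F)$, and the fact that $s_\F=\{\,\cdot\,,L+\F\}$ is a graded derivation of the pointwise product (so $s_\F(\X\Y)=(s_\F\X)\Y+\X(s_\F\Y)$ for even $\X,\Y$), which gives
\[
\Lap_\F(\X\Y)-(\Lap_\F\X)\Y-\X(\Lap_\F\Y)=i\bigl(\hat s_\F(\X\Y)-(\hat s_\F\X)\Y-\X(\hat s_\F\Y)\bigr),
\]
so that it suffices to prove $\hat s_\F(\X\Y)-(\hat s_\F\X)\Y-\X(\hat s_\F\Y)=-i\{\X,\Y\}$. (Alternatively one may expand the generalized AMWI \eqref{eq:AMWIvf2} with $\X$ replaced by $\lambda\X+\mu\Y$ and read off the coefficient of $\lambda\mu$: the left-hand side above then equals $\{\X,\Y\}+\langle A''(\F),\X\otimes\Y\rangle$, so the argument below also establishes the vanishing $\langle A''(\F),\X\otimes\Y\rangle=0$ announced before the proposition, cf.~\cite{Fro}.)

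Next I would unwind $\hat s_\F$ via the time-ordering map. By \eqref{eq:shatF} and $\hat s_0=T^{-1}\circ s_0\circ T$, using that $T$ is the identity on the local functionals $\X,\Y$ and intertwines the pointwise product with $\cdot_T$, that $Te^{i\F}=e_{\cdot_T}^{i\F}=:\Psi$, and that $e^{-i\F}$ is even, the expression $\hat s_\F(\X\Y)-(\hat s_\F\X)\Y-\X(\hat s_\F\Y)$ equals $e^{-i\F}T^{-1}$ applied to
\[
s_0\bigl(\Psi\cdot_T\X\cdot_T\Y\bigr)-\bigl(s_0(\Psi\cdot_T\X)\bigr)\cdot_T\Y-\X\cdot_T\bigl(s_0(\Psi\cdot_T\Y)\bigr).
\]
Here $s_0=\{\,\cdot\,,L\}$ acts through $\delta^{r}/\delta\phi^{\ddagger}$ paired with the free equation-of-motion density $\delta L/\delta\phi$, and $\delta^{r}/\delta\phi^{\ddagger}$ is a graded derivation of $\cdot_T$ because the $\cdot_T$-contractions involve only $\phi$. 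Distributing it over the three $\cdot_T$-factors, the three summands largely cancel; the surviving discrepancies are exactly of the form ``an equation-of-motion density $\delta L/\delta\phi$ sitting on the wrong side of a $\cdot_T$-factor.'' This is where the restriction to vector fields of first order in $\phi$ enters: Lemma~\ref{useful:lemma} applies (its hypothesis is met by $\X$ and by $\Y$), and moving $\delta L/\delta\phi$ across such a $\cdot_T$-factor produces precisely one extra ``single-contraction'' term $i\,(\,\cdot\,)\,\delta(\,\cdot\,)/\delta\phi$ and nothing higher. The single term in which a genuinely renormalized contribution could hide is the one in which $\delta^{r}/\delta\phi^{\ddagger}$ lands on $\Psi$ and $\delta L/\delta\phi$ is then commuted all the way back through $\cdot_T\X\cdot_T\Y$: that term collapses to $(-s_0\Psi)\cdot_T\X\cdot_T\Y$ and vanishes by the QME~\eqref{eq:QME}, $s_0(e_{\cdot_T}^{i\F})=0$. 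Collecting the remaining classical terms, applying $e^{-i\F}T^{-1}$ (which absorbs $\Psi$ and acts as the identity on the surviving local factors), and identifying the outcome through the defining formula \eqref{eq:antibracket} for the antibracket, one lands on exactly $-i\{\X,\Y\}$, which proves \eqref{eq:DeltaF(XY)}.

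The main obstacle is the sign bookkeeping in the last step: tracking the Grassmann signs from the $\eta$-trick and from the left/right functional derivatives, verifying that the $\Psi$-dependent terms cancel among themselves (they must, since the right-hand side of \eqref{eq:DeltaF(XY)} is independent of $\F$), that each use of Lemma~\ref{useful:lemma} is for a factor that is genuinely at most linear in $\phi$, and that the QME is invoked in exactly the slot where $\delta^{r}/\delta\phi^{\ddagger}$ hits $e_{\cdot_T}^{i\F}$. No new analytic input is needed: the content of the proposition is precisely that, on first-order vector fields, all renormalization effects reduce to the classical single-contraction terms of Lemma~\ref{useful:lemma}.
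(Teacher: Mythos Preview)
Your proposal is correct and follows essentially the same route as the paper: reduce to $\hat s_\F$ via the derivation property of $s_\F$, unwind through $T$ and $\cdot_T$, use Lemma~\ref{useful:lemma} to commute $\delta L/\delta\phi$ across the first-order factors, kill the $\delta^r\F/\delta\phi^{\ddagger}$ contributions by the QME, and identify what remains with the antibracket via~\eqref{eq:antibracket}. One small imprecision: there are \emph{three} ``$\delta^r/\delta\phi^{\ddagger}$ lands on $\Psi$'' terms (one from each of the three $s_0$'s), not one, and their combination into $-s_0(Te^{i\F})\cdot_T\X\cdot_T\Y$ itself requires an iterated application of Lemma~\ref{useful:lemma} before the QME can be invoked---but this is exactly what the paper does, and your sketch captures it.
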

\begin{proof}
Since $\Lap_\F=i(\hat{s}_\F-s_\F)$ and $s_\F$ is a derivation, the statement is equivalent to
\be
\hat{s}_\F(\X\Y)=\hat{s}_\F(\X)\Y+\X\hat{s}_\F(\Y)-i\,\{\X,\Y\}-\hat{s}_\F(1)\X\Y\ .
\ee
Taking into account that $\hat{s}_\F(\X)=e^{-i\F}T^{-1}s_0(Te^{i\F}\X)$ this is equivalent to
\be
s_0(Te^{i\F}\X\Y)=s_0(Te^{i\F}\X)\cdot_T\Y+\X\cdot_Ts_0(Te^{i\F}\Y)-i\,T(e^{i\F}\{\X,\Y\})-s_0(Te^{i\F})\T \X\T \Y\ ,
\ee
by applying \eqref{eq:cdotT} and that $T^{-1}\X=\X$ (and similarly for $\Y$).
 Since also $T^{-1}\{\X,\Y\}=\{\X,\Y\}$, it remains to show that for $\G\in T(\BVmloc(M))$ it holds that
 \be
s_0(\G\cdot_T\X\cdot_T\Y)-s_0(\G\cdot_T\X)\cdot_T\Y-s_0(\G\cdot_T\Y)\cdot_T\X+s_0(\G)\cdot_T\X\cdot_T\Y=i\G\cdot_T\{\X,\Y\}\ .
\ee
For this purpose we use Lemma \ref{Tder}. We get
\be\label{eq:s0-dotT}
\begin{split}
&s_0(\G\cdot_T\X\cdot_T\Y)-s_0(\G\cdot_T\X)\cdot_T\Y-s_0(\G\cdot_T\Y)\cdot_T\X+s_0(\G)\cdot_T\X\cdot_T\Y\\
&=s_0\bigl((\G\cdot_T\X)\cdot_T\Y\bigr)-s_0(\G\cdot_T\X)\cdot_T\Y-(\G\cdot_T\X)\cdot_{T_1} s_0(\Y)\\
&+\bigl(\G\cdot_{T_1} s_0(\Y)+s_0(\G)\cdot_T\Y-s_0(\G\cdot_T\Y)\bigr)\cdot_T\X+Z\\
&=i\{\G\cdot_T\X,\Y\}-i\{\G,\Y\}\cdot_T\X+Z\\
\end{split}
\ee
where
\be
Z\doteq -\X\cdot_{T_1}\bigl(\G\cdot_{T_1} s_0(\Y)\bigr)+\bigl(\X\cdot_{T_1}\G\bigr)\cdot_{T_1} s_0(\Y)=
-\langle s_0(\Y)'',E^{\mathrm{F}}\X'\otimes E^{\mathrm{F}}\G'\rangle\ .
\ee
But $Z$ is just the correction to the derivation property of the antibracket with respect to the time ordered product,
that is, the r.h.s.~of \eqref{eq:s0-dotT} is indeed equal to $i\G\cdot_T\{\X,\Y\}$.
To wit, we have
\be
\begin{split}
&\{\G\cdot_T\X,\Y\}-\{\G,\Y\}\cdot_T\X-\G\cdot_T\{\X,\Y\}\\
&=\{\langle \G',E^{\mathrm{F}}\X'\rangle,\Y\}-\langle\{\G,\Y\}',E^{\mathrm{F}}\X'\rangle-\langle\G',E^{\mathrm{F}}\{\X,\Y\}'\rangle\\
&=\langle\{\G',\Y\},E^{\mathrm{F}}\X'\rangle+\langle\{\X',\Y\},E^{\mathrm{F}}\G'\rangle-\langle\{\G,\Y\}',E^{\mathrm{F}}\X'\rangle-\langle\G',E^{\mathrm{F}}\{\X,\Y\}'\rangle\\
&=-\langle\{\G,\Y'\},E^{\mathrm{F}}\X'\rangle-\langle\{\X,\Y'\},E^{\mathrm{F}}\G'\rangle\\
&=-\int dxdydz\frac{\delta^2\Y}{\delta\phi(x)\delta\phi^{\ddagger}(y)}E^{\mathrm{F}}(x,z)\Bigl(\frac{\delta\X}{\delta\phi(z)}\frac{\delta\G}{\delta\phi(y)}+\frac{\delta\X}{\delta\phi(y)}\frac{\delta\G}{\delta\phi(z)}\Bigr)\ ,
\end{split}
\ee
and, by using \eqref{eq:Ef->delta}, this coincides with $iZ$ since
\be
s_0(\Y)''(x,y)=-\int dz \Bigl(\frac{\delta^2 L}{\delta\phi(x)\delta\phi(z)}\frac{\delta^2\Y}{\delta\phi(y)\delta\phi^{\ddagger}(z)}+
\frac{\delta^2 L}{\delta\phi(y)\delta\phi(z)}\frac{\delta^2\Y}{\delta\phi(x)\delta\phi^{\ddagger}(z)}\Bigr)\ .
\ee
\end{proof}

\begin{remark}
    Note that if, in addition, the quantum master equation \eqref{eq:QME} holds, then $\Lap_\F(1)=0$ and we obtain the more familiar relation:
\be\label{eq:DeltaF(XY)b}
\Lap_\F(\X\Y)=(\Lap_\F\X)\Y+\X(\Lap_\F\Y)+\{\X,\Y\}\,.
\ee
\end{remark}
The result on $A''$ follows now directly from Proposition \ref{prop:BV-Laplace}. 
\begin{proposition}\label{prop:secondorder}
Let $A$ be the anomaly appearing in the AMWI. Let $\F\in\BVnloc{1}(M)$ even, $L$ independent of antifields and let $\X,\Y\in\BVnloc{1}(M)$ be at most linear in $\phi$.
Then 
\be
\langle A''(\F),\X\otimes\Y\rangle=0\ .
\ee
\end{proposition}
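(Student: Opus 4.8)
The plan is to compute $\Lap_\F(\X\Y)$ in two independent ways and compare. On the one hand, Proposition~\ref{prop:BV-Laplace} applies verbatim, since its hypotheses — $\X,\Y$ at most linear in $\phi$ and even, $\F\in\BVnloc{1}(M)$ even satisfying the QME — are exactly those of the present statement; it gives
\[
\Lap_\F(\X\Y)=(\Lap_\F\X)\,\Y+\X\,(\Lap_\F\Y)+\{\X,\Y\}\,.
\]
On the other hand, $\Lap_\F(\X\Y)$ can be read off from the generating-function form of $\Lap_\F$, and there it will carry an additional term $\langle A''(\F),\X\otimes\Y\rangle$; comparing the two expressions forces that term to vanish.

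For the second computation I would first record the generating-function identity for $\Lap_\F$ in the presence of antifields. Feeding \eqref{eq:AMWIvf2} into the definition $\Lap_\F=i(\hat{s}_\F-s_\F)$ \eqref{eq:LapF-def}, and using that $s_\F$ is a derivation (so that $s_\F(e^{i\mathcal W})=i\,e^{i\mathcal W}\{\mathcal W,L+\F\}$ for even $\mathcal W$), the $\{\mathcal W,L+\F\}$-terms cancel and one obtains
\[
\Lap_\F(e^{i\mathcal W})=-\,e^{i\mathcal W}\Bigl(\tfrac{1}{2}\{\mathcal W,\mathcal W\}+A(\F+\mathcal W)-A(\F)\Bigr),\qquad \mathcal W\in\BVnloc{1}(M)\ \text{even}\,.
\]
Differentiating once at $\mathcal W=\lambda\X$ recovers $\Lap_\F\X=i\langle A'(\F),\X\rangle$, the analogue of \eqref{eq:LapF-A'} valid also when $\F$ contains antifields; the new input is the quadratic term of this identity.

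I would then apply the identity with $\mathcal W=\X+\Y$. Thanks to the $\eta$-trick one has $\X^2=\Y^2=0$, hence $(\X+\Y)^3=0$ and $e^{i(\X+\Y)}=1+i(\X+\Y)-\X\Y$ is a finite sum; by linearity of $\Lap_\F$ and $\Lap_\F(1)=0$ the left-hand side equals $i\Lap_\F\X+i\Lap_\F\Y-\Lap_\F(\X\Y)$. On the right-hand side I would use $\tfrac{1}{2}\{\X+\Y,\X+\Y\}=\{\X,\Y\}$, the graded symmetry of $A''(\F)$ (which makes its contribution $\langle A''(\F),\X\otimes\Y\rangle$, with coefficient $1$), and $\langle A'(\F),\cdot\rangle=-i\Lap_\F(\cdot)$ to reassemble the mixed $A'$-terms; matching the components proportional to $\X\Y$ yields
\[
\Lap_\F(\X\Y)=(\Lap_\F\X)\,\Y+\X\,(\Lap_\F\Y)+\{\X,\Y\}+\langle A''(\F),\X\otimes\Y\rangle\,.
\]
Subtracting the identity from Proposition~\ref{prop:BV-Laplace} gives $\langle A''(\F),\X\otimes\Y\rangle=0$. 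Specializing to $\F=F\in\Fnloc{1}(M)$ (no antifields, so the QME holds trivially and $A(F)=0$) and $\X=\eta_1\partial_X$, $\Y=\eta_2\partial_Y$ with $X,Y\in\LieGc$ then gives in particular $\langle A''(F),\partial_X\otimes\partial_Y\rangle=0$, which is what is used after Proposition~\ref{prop:CCBV}.

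The one genuinely delicate point is the sign bookkeeping in the last step: because of the Grassmann multipliers one must check that the ``diagonal'' contributions (those quadratic in a single $\eta_i$) drop out, that the cross-terms of $i(\X+\Y)\langle A'(\F),\X+\Y\rangle$ reassemble precisely, and with the correct signs, into $(\Lap_\F\X)\Y+\X(\Lap_\F\Y)$, and that $\tfrac{1}{2}\{\X+\Y,\X+\Y\}$ and $\tfrac{1}{2}\langle A''(\F),(\X+\Y)^{\otimes 2}\rangle$ collapse to $\{\X,\Y\}$ and $\langle A''(\F),\X\otimes\Y\rangle$ respectively. This is routine but error-prone; everything else is just the already-established linearity of $\Lap_\F$ together with the two input facts \eqref{eq:AMWIvf2} and Proposition~\ref{prop:BV-Laplace}.
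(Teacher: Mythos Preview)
Your proposal is correct and follows essentially the same approach as the paper: both compute the bilinear part of the generalized AMWI \eqref{eq:AMWIvf2} applied to $e^{i(\X+\Y)}$ (the paper via real parameters $\lambda,\mu$ and selecting the $\lambda\mu$-coefficient, you via the $\eta$-nilpotency $\X^2=\Y^2=0$) and then compare with Proposition~\ref{prop:BV-Laplace}. The only cosmetic difference is that you pass to $\Lap_\F$ first by subtracting the derivation $s_\F$ at the level of the generating function, whereas the paper keeps $\hat{s}_\F=s_\F-i\Lap_\F$ together and invokes the derivation property of $s_\F$ at the end.
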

\begin{proof}
Without restriction of generality we may assume that both $\X$ and $\Y$ are even (by using the $\eta$-trick).
Let $\lambda,\mu\in\RR$. 
From the generalized AMWI \eqref{eq:AMWIvf3} we have that:
\begin{align}\label{eq:aux:FXY}
&\hat{s}_\F(e^{i(\lambda \X+\mu \Y)})\equiv (s_\F-i\Lap_\F)(e^{i(\lambda \X+\mu \Y)})\\
&\overset{\eqref{eq:AMWIvf3}}{=}
i\,e^{i(\lambda \X+\mu \Y)} \left(\frac{1}{2}\{\lambda \X+\mu \Y+L+\F, \lambda \X+\mu \Y+L+\F\}+ A(\F+\lambda \X+\mu \Y)\right) 
\nonumber
\end{align}
Selecting the terms proportional to $\lambda\mu$ and using the expression \eqref{QMEDelta1} for $\Lap_{\F}(1)$, 
we obtain
\begin{align}
(s_\F  -i\Lap_\F)(\X\Y)
&=\X\bigl(\{\Y,L+\F\}-i\Lap_\F(\Y)\bigr)+\Y\bigl(\{\X,L+\F\}-i\Lap_\F(\X)\bigr)\nonumber\\
&\quad -i\{\X,\Y\}+i\X\Y \Lap_{\F}(1) 
-i\langle A''(\F),\X\otimes\Y\rangle\,,\nonumber
\end{align}
by using the analog of \eqref{eq:LapF-A'} for $\F\in\BVnloc{1}(M)$.
The statement follows from the derivation property of $s_\F$ and Proposition~\ref{prop:BV-Laplace}.
\end{proof}

For $F\in\Fnloc{1}(M)$,  one obtains a map
\be
\LieGc\times\Floc(M)\ni(X,F)\mapsto s_F(\partial_X)-i\Lap_F(\partial_X)=\partial_X(F+L)-\Delta X(F)\,,
\ee
which coincides with the action previously constructed in \eqref{eq:Lie-rep}. The fact that it is an action was derived from the cocycle relation \eqref{eq:WZ} for $X\mapsto \Delta X$ as a consequence of the cocycle relation for the anomaly map $\zeta$ in the UAMWI. 

Actually, the cocycle relation for $\Delta$ in the form of the equivalent consistency relation \eqref{eq:WZ-0} derives directly from 
the BV-consistency condition \eqref{eq:CCBV}.

\begin{proposition}\label{prop:BV->WZ}
The BV consistency relation \eqref{eq:CCBV} implies the extended Wess-Zumino consistency relation \eqref{eq:WZ-0}. 
\end{proposition}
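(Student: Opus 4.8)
The plan is to obtain \eqref{eq:WZ-0} as the $\eta_1\eta_2$-component of the BV consistency relation \eqref{eq:CCBV}, evaluated on a single element that simultaneously encodes both symmetry directions. Fix $F\in\Fnloc{1}(M)$, introduce two anticommuting odd Grassmann generators $\eta_1,\eta_2$, and set $\F\doteq F+\partial_X\eta_1+\partial_Y\eta_2$; this is an even element of $\BVnloc{1}(M)$, so Proposition~\ref{prop:CCBV} applies. First I would simplify $A(\F)$. Since $A$ lowers the antifield number by one, $A(F)=0$; since $F$ trivially satisfies the QME and $\partial_X,\partial_Y$ are of first order in $\phi$, Proposition~\ref{prop:secondorder} gives $\langle A''(F),\partial_X\otimes\partial_Y\rangle=0$. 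Hence every term of order $\geq 2$ in the $\eta_i$ disappears, and \eqref{eq:DeltaX-A} gives the closed form $A(\F)=-\Delta X(F)\,\eta_1-\Delta Y(F)\,\eta_2$.

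Next I would expand both summands of \eqref{eq:CCBV} and extract the coefficient of $\eta_1\eta_2$. In $\{L+\F,A(\F)\}$, the bracket $\{L+F,\cdot\}$ annihilates antifield-independent functionals and the terms with $\eta_i^2$ vanish, so only $\{\partial_X\eta_1+\partial_Y\eta_2,\,-\Delta X(F)\eta_1-\Delta Y(F)\eta_2\}$ survives; using $\{\partial_X,G\}=\partial_X G$ for a functional $G$, its $\eta_1\eta_2$-part is $\partial_X(\Delta Y(F))-\partial_Y(\Delta X(F))$, up to an overall sign. For the second summand, set $B\doteq\tfrac12\{L+\F,L+\F\}+A(\F)$; using $\{L+F,L+F\}=0$ and $\{\partial_X,\partial_Y\}=[\partial_X,\partial_Y]=\partial_{[X,Y]}$ (eq.~\eqref{eq:bracket}) one finds, up to signs,
\[
B=\bigl(-\partial_X(L+F)-\Delta X(F)\bigr)\eta_1+\bigl(-\partial_Y(L+F)-\Delta Y(F)\bigr)\eta_2+\tfrac12\{\partial_X,\partial_Y\}\,\eta_1\eta_2\,.
\]
Taylor-expanding $\langle A'(\F),B\rangle=\langle A'(F),B\rangle+\langle A''(F),(\partial_X\eta_1+\partial_Y\eta_2)\otimes B\rangle+\dots$ (the remaining terms being of order $\geq 3$ in the $\eta_i$, hence zero), the $\eta_1\eta_2$-coefficient gets two contributions: from $\langle A'(F),B\rangle$ only the $\tfrac12\{\partial_X,\partial_Y\}=\partial_{[X,Y]}$ term contributes, equal to $-\Delta([X,Y])(F)$ by \eqref{eq:DeltaX-A}; in the $A''$-term, one uses the identity $\langle A''(F),\partial_X\otimes G\rangle=-\langle(\Delta X)'(F),G\rangle$ for $G\in\Floc(M)$ (obtained by differentiating \eqref{eq:DeltaX-A} along $G$), which with $G=-\partial_Y(L+F)-\Delta Y(F)$, together with the $X\leftrightarrow Y$ mirror term, produces exactly the pairings $\langle(\Delta X)'(F),\partial_Y(L+F)\rangle$, $\langle(\Delta X)'(F),\Delta Y(F)\rangle$ and their $X\leftrightarrow Y$ partners appearing in \eqref{eq:WZ-0}. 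Setting the total coefficient of $\eta_1\eta_2$ to zero and solving for $\Delta([X,Y])(F)$ yields the extended Wess-Zumino relation.

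The hard part will be purely bookkeeping: keeping consistent track of the parity shift in the antibracket \eqref{eq:antibracket}, of the anticommuting $\eta_i$ moving past odd vector fields and antifields, and of the graded symmetry of $A''$, so that all normalisations line up with \eqref{eq:WZ-0}. Two structural facts keep the outcome clean once the signs are fixed: first, no spurious double-derivative terms $\partial_X\partial_Y(L+F)$ can appear, since in \eqref{eq:CCBV} the argument $L+\F$ is differentiated at most once by $\partial_X$ (or $\partial_Y$) directly and at most once inside $(\Delta\,\cdot\,)'$; second, the collapse $A(\F)=\langle A'(F),\cdot\rangle$ and the vanishing of $\langle A''(F),\partial_X\otimes\partial_Y\rangle$ both rest on $A''$ vanishing on vector fields of first order in $\phi$, which is exactly why the restriction to $\partial_X$ with $X\in\LieGc$ is indispensable, as anticipated.
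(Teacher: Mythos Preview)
Your proposal is correct and follows essentially the same route as the paper: insert $\F=F+\partial_X\eta_1+\partial_Y\eta_2$ into \eqref{eq:CCBV}, collapse $A(\F)$ to $-\Delta X(F)\,\eta_1-\Delta Y(F)\,\eta_2$ via $A(F)=0$ and Proposition~\ref{prop:secondorder}, and read off the $\eta_1\eta_2$-coefficient. The only difference is organisational---the paper evaluates $\langle A'(\F),G+\partial_Z\eta\rangle$ by differentiating the already-simplified $A(\F+\tau(\cdot))$ directly rather than Taylor-expanding $A'$ around $F$---and the sign/factor bookkeeping you flag (including the stray $\tfrac12$ in front of $\{\partial_X,\partial_Y\}$, which should not be there) is indeed all that remains.
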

\begin{proof}
Let $L$ be independent of antifields. We  insert
 $\F=F+\partial_{X_1}\eta_1+\partial_{X_2}\eta_2$ into the BV consistency relation \eqref{eq:CCBV},
 where $F\in\Floc(M)$ and $\eta_1,\eta_2$ are Grassmann generators. We use the fact that 
 $\langle A''(F),\partial_{X_1}\otimes\partial_{X_2}\rangle=0$.
 Since $A(F)=0$ we obtain the following finite Taylor expansion in $\eta_1,\eta_2$:
 \be\label{eq:A(F)}
 A(\F)=-\Delta X_1(F)\,\eta_1-\Delta X_2(F)\,\eta_2\,,
 \ee
 where we also used \eqref{eq:DeltaX-A}. In particular note that $\frac{\delta A(\F)}{\delta\phi^\ddagger}=0$.
 With that we obtain
 \begin{align}\label{eq:part1}
 \{L+\F,A(\F)\}&=-\{(\partial_{X_1}\eta_1+\partial_{X_2}\eta_2),\bigl(\Delta X_1(F)\,\eta_1+\Delta X_2(F)\,\eta_2\bigr)\}\nonumber\\
 &=\bigl(-\partial_{X_1}\Delta X_2(F)+\partial_{X_2}\Delta X_1(F)\bigr)\,\eta_1\eta_2\ .
 \end{align}
Note that
 \begin{align}
 \langle A'(\F),(G+\partial_Z\eta)\rangle &=\frac{d}{d\tau}\Big\vert_{\tau=0}A\bigl(\F+\tau (G+\partial_Z\eta)\bigr)\\
 &=-\frac{d}{d\tau}\Big\vert_{\tau=0}
 \Bigl(\Delta X_1(F+\tau G)\,\eta_1+\Delta X_2(F+\tau G)\,\eta_2+\tau\Delta Z(F+\tau G)\,\eta\Bigr)\nonumber\\
 &=-\langle(\Delta X_1)'(F),G\rangle\eta_1-\langle(\Delta X_2)'(F),G\rangle\eta_2 -\Delta Z(F)\,\eta\ ,
 \end{align}
where $G\in\Floc(M)$, $Z\in\LieGc$ and $\eta$ is another Grassmann generator. Hence, using \eqref{eq:bracket}, 
we obtain
 \begin{align}\label{eq:part2}
 \langle A'(\F),\left(\tfrac{1}{2}\{L+\F,L+\F\}\right)\rangle &=
 \langle A'(\F),\left((\partial_{X_1}\eta_1+\partial_{X_2}\eta_2)(L+F)-\partial_{[X_1,X_2]}\,\eta_1\eta_2\right)\rangle
 \nonumber\\
 =\Bigl(-\langle(\Delta X_1)'(F),&\ \partial_{X_2}(L+F)\rangle+\langle(\Delta X_2)'(F),\partial_{X_1}(L+F)\rangle+\Delta[X_1,X_2](F)\Bigr)\eta_1\eta_2\  
 \end{align}
  and
 \begin{align}\label{eq:part3}
 \langle A'(\F),\, A(\F)\rangle&=-\langle A'(\F),\left(\Delta X_1(F)\,\eta_1+\Delta X_2(F)\,\eta_2\right)\rangle
 \nonumber\\
 &=\Bigl(\langle(\Delta X_1)'(F),\Delta X_2(F)\rangle-\langle(\Delta X_2)'(F),\Delta X_1(F)\rangle\Bigr)\eta_1\eta_2\,.
 \end{align}
 Composing \eqref{eq:part1}, \eqref{eq:part2} and \eqref{eq:part3}, we obtain the consistency equation \eqref{eq:WZ-0}.
\end{proof}
\begin{remark}
Note that by tracing back the arguments given in this section, we can see that Proposition~\ref{prop:BV->WZ} essentially states that the generalized Wess-Zumino consistency condition is the consequence of $\hat{s}^2_{\F}=0$. We can compare this with a simple fact  that the nilpotency of the non-renormalized BV Lapalacian $\triangle$ (see \eqref{eq:BVLapnr}) implies an analogous statement for vector fields. Without the loss of generality, we assume $\X$ and $\Y$ to be even (we multiply the usual vector fields with Grassman parameters) $\mathcal{X},\mathcal{Y}\in \BV_{\mathrm{reg}}(M)$ (regular multivector fields):
\[
    0=\triangle^2(\X \Y)=\Lap\bigl((\Lap \X)  \Y+\X(\Lap \Y)+\{\X,\Y\}\bigr)
    =\partial_{\X}(\Lap \Y)+\partial_{\Y}(\Lap \X)+\Lap(\{\X,\Y\})\,
\]
by using that $\Delta$ satisfies a relation analogous to \eqref{eq:DeltaF(XY)},
where $\partial_{\X}$, $\partial_{\Y}$ denotes the natural action of vector fields on functionals as derivations.
\end{remark}

\section{Summary and Outlook}
Symmetries of the classical configuration space are, in general, modified in the quantum theory by anomalies, which have to satisfy certain consistency relations. In different formulations of quantum field theory these consistency relations appear in different ways, and it is not obvious how they are related to each other. We clarified in this paper how the cocycle relation for the anomaly map in a recently proposed nonperturbative characterization of anomalies (the unitary anomalous master Ward identity (UAMWI) in \cite{BDFR21}) is connected with previous consistency relations. As a byproduct this provides new insights into the older formulations. We also clarified the relation between the occurrence of anomalies and the principle of perturbative agreement proposed in \cite{HW04}.

We gave an elementary proof of a consistency relation for the anomaly map $\Delta$ in the anomalous master Ward identity (AMWI) of \cite{Brennecke08}, restricted to infinitesimal symmetries of the configuration space (Thm.~\ref{thm:cc}),
which was originally derived by Hollands \cite{Hollands08} using the antifield formalism. We named it \textit{extended Wess-Zumino condition}, as it can be understood as an extension to non-quadratic interactions of the well-known Wess-Zumino consistency condition \eqref{eq:Cc-G(X)}.

We then showed that our extended Wess-Zumino consistency condition 
can be deduced from the cocycle relation \eqref{eq:cocycle} for the anomaly map 
$\zeta$ occuring in the UAMWI and, hence, describes a Lie algebraic cocycle of the Lie algebra of the group of compactly supported configuration 
space symmetries $\G_c(M)$ with values 
in the Lie algebra of the St\"uckelberg-Petermann renormalization group (Thm.~\ref{thm:Liecc}). 
Conversely, in the framework of perturbation theory, starting with the AMWI one can derive the UAMWI with an anomaly map $\zeta$ fulfilling
the cocycle relation \eqref{eq:cocycle}, see Thm.~\ref{th:AMWI-uAMWI}.

 We also investigated the connection to the BV formalism (as previously studied in \cite{Hollands08,FredenhagenR13,Fro}) and the underlying algebraic structures. In particular we verified that the extended Wess-Zumino consistency 
condition \eqref{eq:WZ-0} can be obtained from the nilpotency of the BV operator $\hat s_{\F}$, by restricting to symmetries $\g\in\G_c(M)$
(Prop.~\ref{prop:BV->WZ}). Our proof starts with the consistency condition \eqref{eq:CCBV} (Prop.~\ref{prop:CCBV}) proven by Hollands \cite{Hollands08},
which can be understood as a particular case of the generalized Jacobi identity for the underlying $L_{\infty}$-algebras 
(see the proof of Prop.~\ref{prop:CCBV}) and relies on $\hat s_{\F}^2=0$.

It is an interesting open problem to find the group-like structure associated to the $L_{\infty}$-structure for more general symmetries and to understand its relation to renormalization. This will be addressed in our future work.

\section*{Acknowledgement}
    The inclusion of a discussion of perturbative agreement and the generalization of Proposition \ref{prop:secondorder} to    interactions, which do not necessarily satisfy the Quantum Master Equation, was suggested to us by one of the referees
    of this paper, a suggestion which is gratefully acknowledged.


\section*{Data availability statement}
Data sharing is not applicable to this article as no new data were created or analyzed in this study.

\appendix

\section{Off shell UAMWI}\label{app:AMWI-uAMWI}
In \cite[Thm.~10.3(i)]{BDFR21} we showed that in formal perturbation theory the UAMWI holds on shell. In this appendix we prove 
by a slightly improved argument that also the off-shell version of the UAMWI \ref{eq:uni-anom-MWI} holds. 
Here ``off-shell'' means that $\phi$ can be arbitrary, but we also introduce external sources $q$ and at the end we set 
$q=\frac{\delta L}{\delta\phi}[\phi]$, so {\it a priori} our expressions are functions of two variables, $\phi$ and $q$ 
and the UAMWI holds on a subspace where the condition $q=\frac{\delta L}{\delta\phi}[\phi]$ is satisfied. 
Crucially, on that subspace the left-hand side of UAMWI proven below does not depend on $q$.
\begin{theorem}\label{th:AMWI-uAMWI}
In formal perturbation theory,  the AMWI implies the off-shell unitary AMWI
\be
S\circ\zeta_\g(F)[\phi]=S\circ \g_{L_q}(F)[\phi]\ ,\ \text{ for }\ \phi\ { solving }\ q= \frac{\delta L}{\delta\phi}[\phi]\ ,
\quad\text{for all}\ F\in\Floc(M)\ ,\, \g\in\G_c(M)\ ,
\ee
with a cocycle $\zeta$ taking values in  $\Rc_c$ and with $\supp\zeta_\g\subset\supp\g$.
\end{theorem}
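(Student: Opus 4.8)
The plan is to integrate the infinitesimal AMWI \eqref{eq:AMWI-q} to its finite unitary form by following paths in the group $\G_c(M)$, producing the anomaly cocycle $\zeta$ as the path-ordered exponential of the Lie-algebra-valued anomaly $\Delta$ dictated by the cocycle relation \eqref{eq:cocycle}, while keeping the external source $q$ explicit throughout. Since $\G_c(M)$ is generated by point transformations and affine field redefinitions, it suffices to build $\zeta_\g$ for $\g$ that can be joined to $\e$ by a smooth path $\lambda\mapsto\g^\lambda$ with $\g^0=\e$, $\g^1=\g$, with $\LieGc$-valued velocities $X^\lambda$, and with supports contained in a fixed compact set; the general case then follows by declaring $\zeta$ to satisfy \eqref{eq:cocycle} on products. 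One fixes a configuration $\phi$ with $q=\frac{\delta L}{\delta\phi}[\phi]$ and studies the two curves $\lambda\mapsto S\bigl(\g^\lambda_{L_q}F\bigr)[\phi]$ and $\lambda\mapsto S\bigl(\zeta_{\g^\lambda}F\bigr)[\phi]$, which agree at $\lambda=0$ since $\e_{L_q}=\mathrm{id}$ and $\zeta_\e=\mathrm{id}$.

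First I would differentiate the left-hand side: using $(\g\h)_{L_q}=\g_{L_q}\circ\h_{L_q}$ (which holds because $\g\mapsto\g_\ast$ is a representation and $\g\mapsto\delta_\g M$ is automatically a cocycle for any fixed $M$) one gets $\frac{d}{d\lambda}S\bigl(\g^\lambda_{L_q}F\bigr)[\phi]=i\,T\bigl(e^{i\g^\lambda_{L_q}F}\cdot\partial_{X^\lambda}(\g^\lambda_{L_q}F+L_q)\bigr)[\phi]$, and the AMWI \eqref{eq:AMWI-q} with $F$ replaced by $\g^\lambda_{L_q}F$ turns this into $i\,T\bigl(e^{i\g^\lambda_{L_q}F}\cdot\Delta X^\lambda(\g^\lambda_{L_q}F)\bigr)[\phi]$ on the shell $q=\frac{\delta L}{\delta\phi}[\phi]$. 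Next I would \emph{define} $\zeta_{\g^\lambda}$ as the solution, with $\zeta_{\g^0}=\mathrm{id}$, of the differential equation obtained by differentiating \eqref{eq:cocycle} along the path, namely $\frac{d}{d\lambda}\zeta_{\g^\lambda}(F)=\bigl\langle\zeta_{\g^\lambda}'(F),(\g^\lambda_\ast)^{-1}\Delta X^\lambda(\g^\lambda_L F)\bigr\rangle$ for $F\in\Floc(M)$; in formal perturbation theory this is solved order by order (a linear problem at each order), and the solution lies in $\Rc_c$ with $\supp\zeta_{\g^\lambda}\subset\supp\g^\lambda$ because the defining properties $(\mathrm{P}1)$--$(\mathrm{P}5)$ of $\Delta X^\lambda\in\LieRc$ propagate along the flow. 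Differentiating the right-hand side then gives $i\,T\bigl(e^{i\zeta_{\g^\lambda}F}\cdot\langle\zeta_{\g^\lambda}'(F),(\g^\lambda_\ast)^{-1}\Delta X^\lambda(\g^\lambda_L F)\rangle\bigr)[\phi]$, and one matches it with the expression obtained above by using the transformation behaviour (naturality) of $\Delta$ under the $\G_c(M)$-action --- which relates $\Delta X^\lambda(\g^\lambda_{L_q}F)$ to $\g^\lambda$ acting on $(\g^\lambda_\ast)^{-1}\Delta X^\lambda(\g^\lambda_L F)$ --- together with field independence and the off-shell field equation, which re-incorporate the source $q$ exactly as in the passage from \eqref{eq:AMWI-q} to \eqref{eq:anomMWI}. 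Since the two curves solve the same evolution with the same initial value they coincide, and at $\lambda=1$ this is the asserted identity.

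It then remains to check that $\zeta$ is path-independent and obeys \eqref{eq:cocycle} on all of $\G_c(M)$: here the consistency condition for $\Delta$ (Theorems~\ref{thm:cc} and \ref{thm:Liecc}) is exactly what guarantees that the path-ordered exponential depends only on the endpoint $\g$ and that composing two of them reproduces \eqref{eq:cocycle}; running the argument backwards also shows that \eqref{eq:cocycle} is \emph{necessary} for the UAMWI. Finally, the off-shell statement with sources follows for free: $\zeta_\g$ as constructed does not involve $q$, so the identity --- a priori between functions of the two variables $\phi$ and $q$ --- restricts to the subspace $q=\frac{\delta L}{\delta\phi}[\phi]$, and there its left-hand side $S\circ\zeta_\g(F)[\phi]$ is manifestly $q$-independent.

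I expect the main obstacle to be precisely the coupling of two bookkeeping tasks in the last matching step: controlling how $\Delta$ transforms under the group action (its naturality) while simultaneously tracking the source $q$ through the field equation, so that the two $\lambda$-evolutions agree for all $\lambda$ and not merely to first order. The existence of the $\LieRc$-to-$\Rc_c$ path-ordered exponential, together with preservation of the support and of $(\mathrm{P}1)$--$(\mathrm{P}5)$ along the flow, is the other technical point, but it is essentially a term-by-term argument in the perturbative expansion.
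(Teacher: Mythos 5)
Your overall strategy coincides with the paper's: join $\g$ to $\e$ by a smooth path $\g^\lambda$ with velocities $X^\lambda$, convert the AMWI \eqref{eq:AMWI-q} into a $\lambda$-differential equation whose generator $(\g^\lambda_\ast)^{-1}\Delta X^\lambda\g^\lambda_L$ lies in $\LieRc$, and observe that the source drops out of this generator by property $(\mathrm{P}3)$, so that $\zeta_\g$ is $q$-independent. However, the step you yourself flag as the ``main obstacle'' is a genuine gap, not a bookkeeping issue. After applying the AMWI, the derivative of your first curve is $i\,S(\g^\lambda_{L_q}F)\cdot_T\Delta X^\lambda(\g^\lambda_{L}F)[\phi]$, while the derivative of the second is $i\,S(\zeta_{\g^\lambda}F)\cdot_T\langle\zeta'_{\g^\lambda}(F),(\g^\lambda_\ast)^{-1}\Delta X^\lambda(\g^\lambda_LF)\rangle[\phi]$: each involves $S$ evaluated on exactly the functional whose equality you are trying to establish, so equality of the derivatives cannot be inferred from equality at $\lambda=0$ together with any ``naturality of $\Delta$'' --- the comparison bootstraps on itself. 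The paper's resolution is to consider the \emph{single} curve $\lambda\mapsto S\bigl(\g^\lambda_{L_q}\zeta^{-1}_{\g^\lambda}(F)\bigr)[\phi]$ and to choose $\zeta^{-1}_{\g^\lambda}$ as the solution of \eqref{eq:DGL-Z-1} precisely so that this derivative vanishes identically; evaluating at $\lambda=0$ and $\lambda=1$ then yields the identity. Your ODE for $\zeta_{\g^\lambda}$ is correct (it is equivalent to \eqref{eq:DGL-Z-1}), but without this recombination the argument does not close.

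The second deviation concerns the cocycle relation. You propose to obtain path-independence and \eqref{eq:cocycle} by a flatness/monodromy argument from the consistency condition for $\Delta$. Invoking Theorem \ref{thm:Liecc} here would be circular, since that theorem is \emph{derived} from \eqref{eq:cocycle}; using Theorem \ref{thm:cc} is admissible, but integrating a Lie-algebra cocycle to a group cocycle for the infinite-dimensional group $\G_c(M)$ would at the least require control of its fundamental group, which is nowhere established. The paper's route is both shorter and complete: once the UAMWI holds for each group element, apply it three times to $S\circ\zeta_{\g\h}(F)=S\circ\g_{L_q}\circ\h_{L_q}(F)$, check $\h_{L_q}^{-1}\zeta_\g\h_{L_q}=\h_L^{-1}\zeta_\g\h_L$ (again via $(\mathrm{P}3)$), note that the resulting identity is $q$-free and hence holds for \emph{every} configuration $\phi$, and conclude by injectivity of the off-shell S-matrix; the same injectivity argument disposes of path-dependence as well. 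I recommend replacing your closing step by this argument.
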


\begin{proof} 
Since the elements of $\G_c(M)$ have compact support and depend smoothly on $x$, $\G_c(M)$ must be connected. Therefore,
given any $\g\in\G_c(M)$, there exists a smooth curve $\lambda\mapsto\g^{\lambda}\in\G_c(M)$ with $\g^0=\e$ and $\g^1=\g$. Let 
$X^{\lambda}\in\LieGc$ be defined by $\frac{d}{d\lambda}\g^{\lambda}=X^{\lambda}\g^{\lambda}$. 

In the next step, want to find a smooth curve $\lambda\mapsto\zeta_{\g^\lambda}^{-1}\in\Rc_c$ with $\zeta_\e^{-1}=\mathrm{id}$ and 
\be\label{eq:Z(1-lambda)-1}
\frac{d}{d\lambda}S\bigl(\g^{\lambda}_{L_q}\zeta_{\g^\lambda}^{-1}(F)\bigr)[\phi]=0\ \text{with}\ q=\frac{\delta L}{\delta\phi}[\phi]\ .
\ee
Note that inserting $\lambda =0$ and $\lambda =1$ into $S\circ\g^{\lambda}_{L_q}\circ \zeta_{\g^\lambda}^{-1}$, 
we obtain the unitary AMWI \eqref{eq:uni-anom-MWI}.

To search for the desired curve, we will first derive a differential equation that it has to solve.
Note that for $G\in\Floc(M)$, we have
\be
\frac{d}{d\lambda}\g^{\lambda}_{L_q}G=\partial_{X^{\lambda}}\g^{\lambda}_{L_q}G+\partial_{X^{\lambda}}(L_q)\,,
\ee
so by using this result, we perform the differentiation in \eqref{eq:Z(1-lambda)-1} and obtain the condition
\be
S\bigl(\g^{\lambda}_{L_q} \zeta_{\g^\lambda}^{-1}(F)\bigr)\cdot_T\Bigl(\partial_{X^{\lambda}}\g^{\lambda}_{L_q} \zeta_{\g^\lambda}^{-1}(F)+
\partial_{X^{\lambda}}(L_q)+\g^{\lambda}_{\ast}\frac{d}{d\lambda}\zeta_{\g^\lambda}^{-1}(F)\Bigr)[\phi]=0\ \text{ for }q= \frac{\delta L}{\delta\phi}[\phi]\ .
\ee
We insert the anomalous MWI \eqref{eq:AMWI-q} and find 
\be
S\bigl(\g^{\lambda}_{L_q} \zeta_{\g^\lambda}^{-1}(F)\bigr)\cdot_T
\Bigl(\Delta X^{\lambda}(\g_{L_q}^{\lambda}\zeta_{\g^\lambda}^{-1}(F))+\g^{\lambda}_{\ast}\frac{d}{d\lambda}\zeta_{\g^\lambda}^{-1}(F)\Bigr)[\phi]
=0\ \text{for}\ q=\frac{\delta L}{\delta\phi}[\phi]\ .
\ee
On the other hand, $\Delta X^{\lambda}\circ \g_{L_q}^{\lambda}=\Delta X^{\lambda}\circ\g_{L}^{\lambda}$, since 
$\delta_{\g^\lambda}\langle\Phi,q\rangle$ is at most of first order in $\Phi$ and due to the defining property $(iii)$ of $\LieRc$.
We thus get the desired family $\lambda\mapsto \zeta_{\g^\lambda}^{-1}$ as the unique solution of the differential equation
\be\label{eq:DGL-Z-1}
\frac{d}{d\lambda}\zeta_{\g^\lambda}^{-1}=-({\g_{\ast}^{\lambda}})^{-1}\Delta X^{\lambda}\g_L^{\lambda}\zeta_{\g^\lambda}^{-1}\ ,
\ee 
with the initial condition $\zeta_{\g^0}^{-1}=\mathrm{id}$. As explained for the case $q=0$,
\be\label{eq:inR}
(\g_{\ast}^{\lambda})^{-1}\Delta X^{\lambda}\g^{\lambda}_L\in\LieRc\
\ee
holds, hence $\zeta_{\g^\lambda}^{-1}\in\Rc_c$ follows, so in particular $\zeta_\g\in\Rc_c$.
Since the differential equation \eqref{eq:DGL-Z-1} determining $\zeta$
does not contain $q$, we explicitly see that $\zeta$ can be chosen such that it does not depend on $q$ either. Hence, the proof
of $\supp\zeta_\g\subset\supp\g$ can be adopted from the case $q=0$ as it stands.

It remains to show that $\zeta$ satisfies the cocycle identity. 
Applying three times the UAMWI \eqref{eq:uni-anom-MWI} we obtain
\begin{align}\label{eq:cocycle1}
S\circ\zeta_{\g\h}(F)[\phi]= & S\circ(\g\h)_{L_q}(F)[\phi]=S\circ\g_{L_q}\circ\h_{L_q}(F)[\phi]
\nonumber\\
= & S\circ\zeta_\g\circ\h_{L_q}(F)[\phi]=
S\circ\h_{L_q}\circ(\h^{-1}_{L_q}\zeta_\g\h_{L_q})(F)[\phi]\nonumber\\
= &
S\circ\zeta_\h\circ (\h_{L_q}^{-1}\zeta_\g\h_{L_q})(F)[\phi]
\end{align}
for $\phi$ solving $q=\frac{\delta L}{\delta\phi}[\phi]$. Using again that $\delta_\h\langle\Phi,q\rangle$ is at most of first order in $\Phi$,
we conclude that 
\be
\zeta_\g\h_{L_q}(F)=\zeta_\g\bigl(\h_LF-\delta_\h\langle\Phi,q\rangle\bigr)=\zeta_\g\h_L(F)-\delta_\h\langle\Phi,q\rangle\,,
\ee
since $\zeta_g\in\Rc_c$.%
\footnote{The defining property of $Z\in\Rc_c$ corresponding to the renormalization condition ``off shell field equation" for the timeordered product
\eqref{eq:T-field-eq} reads $Z(F+\langle\Phi,q\rangle)=Z(F)+\langle\Phi,q\rangle$, see e.g.~\cite{DF04} and cf.~the defining property $(\mathrm{P}3)$ 
of $\LieRc$. In addition, $Z(F+c)=Z(F)+c$ for $c\in\RR$ follows from the compactness of $\supp Z$, see \cite[Def.~4.2]{BDFR21} 
and cf.~the defining property $(\mathrm{P}5)$ of $\LieRc$.}
With that we obtain
\be
\h_{L_q}^{-1}\zeta_\g\h_{L_q}=\h_L^{-1}\zeta_{\g}\h_L\equiv\zeta_{\g}^{\h}\ ,
\ee
hence the cocycle relation 
\be
S\circ\zeta_{\g\h}(F)=S\circ\zeta_\h\circ(\zeta_\g)^\h(F)
\ee
holds for all field configurations $\phi$.
Since the  \emph{off-shell} S-matrix is injective,
we obtain the cocycle relation for $\zeta$.
\end{proof}

\end{document}